\newtheorem{theorem}{Theorem}
\newtheorem{lemma}{Lemma}
\newtheorem{remark}{Remark}
\newtheorem{proof}{Proof}
\def\BibTeX{{\rm B\kern-.05em{\sc i\kern-.025em b}\kern-.08em
		T\kern-.1667em\lower.7ex\hbox{E}\kern-.125emX}}
\begin{document}
	%
	\title{Faster-than-Nyquist Signaling \\ for MIMO Communications}
	%
	%
	%
	
	\author{Zichao~Zhang,~\IEEEmembership{Student Member,~IEEE,}
		Melda~Yuksel,~\IEEEmembership{Senior Member,~IEEE,}
		and~Halim~Yanikomeroglu,~\IEEEmembership{Fellow,~IEEE}
		\thanks{This work was supported in
part by the Natural Sciences and Engineering Research Council of
Canada, NSERC, under a Discovery Grant, in part by the Scientific and
Technological Research Council of Turkey, TUBITAK, under Grant
122E248, and in part by Nokia-Bell Labs, under the Corporate
University Donations Program.
}\thanks{Z. Zhang and H. Yanikomeroglu are with the Department of Systems and Computer Engineering at Carleton University, Ottawa, ON, K1S 5B6, Canada e-mail: zichaozhang@cmail.carleton.ca,	halim@sce.carleton.ca.}
		\thanks{M. Yuksel is with the Department of Electrical and Electronics Engineering, Middle East Technical University, Ankara, 06800, Turkey, e-mail: ymelda@metu.edu.tr.}}
	\maketitle
	
	\begin{abstract}
		Faster-than-Nyquist (FTN) signaling is a non-orthogonal transmission technique, which has the potential to provide significant spectral efficiency improvement. This paper studies the capacity of FTN signaling for both frequency flat and for frequency selective (FS) multiple-input multiple-output (MIMO) channels. {As FTN is another reason of frequency selectivity, we find that precoding in time (or equivalently spectrum shaping in frequency) and waterfilling in spatial  domain is capacity achieving for  frequency flat  MIMO channels with FTN. Meanwhile, waterfilling in both spatial  domain and spectrum domain, followed by spectrum shaping, is capacity achieving for FS MIMO channels with FTN.}

	\end{abstract}
	
	\begin{IEEEkeywords}
		Capacity, faster-than-Nyquist (FTN),  multiple-input multiple-output (MIMO).
	\end{IEEEkeywords}

	\section{Introduction}\label{sec:intro}
	
	In recent years, there has been an increasing interest in faster-than-Nyquist (FTN) signaling, as researchers are seeking for new breakthrough technologies to improve transmission rates in modern wireless communication systems. FTN signaling happens to be a promising solution for this search \cite{5gand6g} because of the higher spectral efficiency it offers. 
	
	
	In classical digital communications, orthogonal pulses or Nyquist pulses are used so that there is no inter-symbol interference (ISI) at the demodulator output at sampling time instants. This maximum signaling rate above which orthogonality between consecutive symbols no longer exist is called the Nyquist limit.
	FTN, on the other hand, intentionally operates beyond this Nyquist limit. It introduces intentional ISI using the same pulse shape within the same bandwidth to improve spectral efficiency. In \cite{mazo}, Mazo studied a binary linear modulation scheme utilizing sinc pulses, where the Nyquist limit is $\frac{1}{T}$. He showed that it is possible to decrease the symbol interval down to 0.802$T$ with 
	no performance degradation. 
	This rate is called the Mazo's limit, it is the limiting rate beyond which performance starts to deteriorate for the same signal-to-noise ratio (SNR) for the same bandwidth. 
	
 	A reasonable question at this point is: What is the maximum transmission rate over an FTN channel? For orthogonal communication over additive white Gaussian noise (AWGN) channels, the capacity expression for a bandpass channel is the one obtained by Shannon \cite{Cover}, which states that
\begin{eqnarray}
			C_{orth} & = & \frac{1}{2}\log_2\left(1+\frac{P}{WN_0}\right) ~\text{bit/s/Hz}, \label{eqn:nonFTNcap} 
	\end{eqnarray}when $P$ is the power constraint, $N_0/2$ is the AWGN power spectral density, $W=\frac{1}{2T}$ is the bandwidth, and $\frac{1}{T}$ is the signaling rate. 
	In the literature, there are multiple papers that seek for a similar capacity expression for a single-input single-output (SISO) FTN system. In \cite{rusek,property}, the authors find an achievable rate for a SISO FTN system assuming that transmitted symbols are independent. 
	The optimal input symbol distribution, on the other hand, is not necessarily independent, and some correlation can help reduce ISI. In \cite{bajcsy} and \cite{rusek12}, correlation among input symbols is explored and a power allocation scheme based on waterfilling is suggested. However, because of the influence of other symbols, the power constraint of FTN signaling differs from that of orthogonal signaling, and the power constraints in \cite{bajcsy} or \cite{rusek12} do not account for the effect of ISI. In \cite{spectrashpingKramer,ganji} the authors include the impact of ISI on the power constraint and show that the optimal power spectral density (PSD) of the transmitted symbols is related to the spectrum of the pulse used for signaling. 
{Additionally, \cite{timelocalization} argues that for the capacity expression to be valid, the resultant FTN signal must be well-localized in time, meaning that it should contain a significant portion of its energy in the corresponding time interval.}  
In addition to the above, precoding at the transmitter side is a viable method to mitigate FTN induced ISI \cite{linearprecoding}. Similar to \cite{linearprecoding}, \cite{svd, eigendecomposition} and \cite{chaki} suggest an eigen-decomposition based precoding for FTN signaling. 


	It is well known that using multiple antennas at both transmitter and receiver sides in wireless communications can significantly improve the channel capacity \cite{telatar}. Therefore, the multiple-input multiple-output (MIMO) technology has been an integral part of 3GPP standards since Release 7 \cite{3gpp}. In order to further improve communication rates, it is then reasonable to employ FTN signaling in MIMO systems. Rusek \cite{rusek2009existence} studied the Mazo limit in MIMO communication systems. Modenini, et al. \cite{andrea} demonstrated performance improvements for FTN in large-scale antenna systems with a simple receiver structure. Yuhas, et al. \cite{michael} explored MIMO FTN capacity with independent inputs as in \cite{rusek}, but their assumption is not general enough as the optimal input distribution is not necessarily independent. {The ergodic capacity of MIMO FTN over triply-selective Rayleigh fading channels is studied in \cite{MIMOFTNfad} with the assumption that channel state information is not available at the transmitter.} However, the capacity of a MIMO system with FTN signaling with channel state information at both the transmitter and the receiver is still unknown.
	
	{Frequency selectivity is another important phenomenon in fading channels \cite{goldsmith}, and 
	many results on frequency selective (FS) MIMO channels have appeared in the literature. 
	In \cite{measurecap}, the authors achieved the capacity of FS MIMO channels with joint waterfilling in spatial and frequency domains. 
	In \cite{covmat}, the authors proposed an algorithm to evaluate the capacity-achieving covariance matrix for FS MIMO channels based on an iterative waterfilling scheme. 
In \cite{finite}, a method is proposed to derive the average capacity of FS MIMO channels for arbitrary finite number of transmit and receive antennas. 
	There are also efforts on exploring the performance of FTN in FS channels.
	The authors of \cite{eigendecomposition} obtain the maximized mutual information for a FS SISO channel with FTN as well as its optimal input distribution.
	In \cite{wangjui}, a time-space transformation approach is proposed to increase the achievable rate in FS MIMO channels with FTN. Finally, \cite{simo} suggests an FTN transmission and detection approach for FS SISO systems. However, to the best of our knowledge there is no result on the capacity of FS MIMO channels with FTN. In this paper we study the channel capacity of MIMO FTN signaling both for frequency flat and for FS fading channels.}

	
	The organization of the paper is as follows. For MIMO FTN, we define the system model in Section~\ref{sec2}. We develop the mutual information expression in Section~\ref{subsec3A}, find the related power constraint in Section~\ref{subsec3B}, and solve for the capacity expression in Section~\ref{subsec3C}. In Section~\ref{sec:freqsel}, we generalize the results for MIMO FTN to FS channels. In Section~\ref{sec:freqdomanalysis}, we calculate the same capacity via frequency domain analysis. 
	We present our numerical results in Section~\ref{sec8}. Finally, in Section~\ref{sec9} we conclude the paper.

	\section{System Model }
	\label{sec2}
	
	In MIMO FTN, each transmitting antenna transmits symbols simultaneously using FTN signaling. We assume the transmitter has  $L$, and the receiver has  $K$ antennas. The MIMO channel matrix is denoted with  $\bm{\tilde{H}}\in \mathbb{C}^{K \times L}$. The  $(k,l)$th entry, $h_{kl}$, in $\bm{\tilde{H}}$ is the complex channel gain from transmit antenna $l$ to receive antenna $k$, $l \in \{1,\cdots, L\}$, $k \in \{1,\cdots, K\}$.
  The  sequence of  data symbols transmitted from the $l$th transmitting antenna $a_l[n]$ passes through the pulse shaping filter $p(t)$. We assume that the pulse shaping filter has unit energy; i.e., $\int_{-\infty}^{+\infty}|p(t)|^2dt=1.$ Moreover, the transmitter has a power constraint $P$. The transmitted signal from this antenna for a length of $N$ symbols is written as 
	\begin{align}
x_l(t)&=\sum_{m=1}^{N}a_l[m]p(t-m\delta T),\label{eqn:xt}
	\end{align}
	where $T$ denotes the signaling period, and $\delta \in (0,1]$ is the acceleration (compression) factor in FTN signaling. In FTN, pulses are transmitted every $\delta T$ seconds, creating intentional inter-symbol interference. 

	We assume an additive white Gaussian noise (AWGN) channel and denote the noise at the $k$th receiving antenna with  $\xi_k(t), k=1,2,\cdots,K$. This noise process has zero mean and a constant power spectral density equal to $\sigma_0^2$. We assume noise at each receiving antenna are independent and identically distributed and are independent of data symbols transmitted.
	In order to maximize the signal-to-noise ratio (SNR),  matched filter $p^*(-t)$, where $*$ denotes the complex conjugate, is used at the receiver side. 
	Then, the output of the matched filter at the $k$th receive antenna can be written as
	\begin{align}
		y_k(t) &= \left[\sum_{l=1}^{L}h_{kl}\left(\sum_{m=1}^{N}a_l[m]p(t-m\delta T)\right) + \xi_k(t)\right] \star p^*(-t),  \label{eqn:received} \\
		&=\sum_{l=1}^{L}h_{kl}\left(\sum_{m=1}^{N}a_l[m]g(t-m\delta T)\right)+\eta_k(t),
	\end{align} where $g(t) \triangleq p(t)\star p^*(-t)$, $\eta_k(t) \triangleq \xi_k(t) \star p^*(-t)$ and $\star$ denotes the convolution operation. This matched filter output at each receive antenna $k$ is then sampled every $\delta T$ seconds. Fig. \ref{fig3} shows the structure of a $2 \times 2$ MIMO FTN system as a simple example.

	\begin{figure}[t]
		\centering
		\includegraphics[scale=0.08]{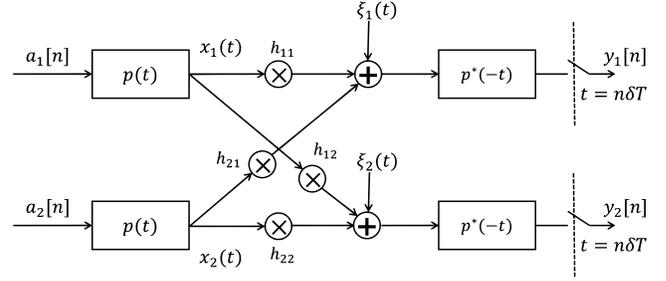}
		\caption{2$\times$2 MIMO FTN system model, where the matched filter outputs are sampled every $\delta T$ seconds.}
		\label{fig3}
	\end{figure}
	
	The noise process $\eta_k(t)$ is a correlated Gaussian process, as the white noise process $\xi_k(t)$ is filtered through $p^*(-t)$. As $\eta_k(m\delta T) = \int_{-\infty}^{\infty}\xi_k(\tau)p^*(\tau-m\delta T)d\tau$, we have
	\begin{align}
	&\mathbb{E}[\eta_k(n\delta T)\eta_k^*(m\delta T)] \notag\\ &=\int_{-\infty}^{\infty}\int_{-\infty}^{\infty}\mathbb{E}[\xi_k(t)\xi_k^*(\tau)]p^*(t-n\delta T)p(\tau-m\delta T)dtd\tau\label{eq5}\\
		&=\sigma_0^2\int_{-\infty}^{\infty}p(t)p^*(t-(n-m)\delta T)dt \\
		&=\sigma_0^2g((n-m)\delta T). \label{eqn:noisecor}
	\end{align}
	Here $\mathbb{E}$ denotes the expectation operation. After sampling, the output of the sampler becomes 
	\begin{align}
		y_k[n] &= \sum_{l=1}^{L}h_{kl}\left(\sum_{m=1}^{N}a_l[m]g((n-m)\delta T) \right) + \eta_k[n].\label{eqn:discretech}
	\end{align} 
	The input-output relationship in \eqref{eqn:discretech} can also be written in a matrix multiplication form as
	\begin{align}
	\bm{y}_k &= \sum_{l=1}^{L}h_{kl}\bm{G}\bm{a}_l+\bm{\eta}_k, \label{eqn:yk}
	\end{align}
	where $\bm{y}_k\in\mathbb{C}^{N\times1}$,  $\bm{a}_l\in\mathbb{C}^{N\times1}, \bm{\eta}_k\in\mathbb{C}^{N\times1}$ and $\bm{G}\in\mathbb{C}^{N\times N}$. The $n$th  entry of the vectors are $(\bm{y}_k)_n=y_k[n]$, $(\bm{a}_l)_n=a_l[n]$ and $(\bm{\eta}_k)_n=\eta_k[n]$, and the $(n,m)$th entry of the matrix $\bm{G}$ is given as $(\bm{G})_{n,m}=g[n-m] \triangleq g((n-m)\delta T)$. Note that $\mathbf{G}$ is a Toeplitz matrix \cite{gray}. The matrix $\bm{G}$ is also Hermitian, as $g(-t) = p(-t) \star p^*(t)=g^*(t)$, and $(\bm{G})_{n,m}=g[n-m] =g^*[m-n]=(\bm{G})^*_{m,n}$. 
From this we observe that $\bm{G}$ is a Gramian matrix as well as a Toeplitz matrix. Using the cross correlation calculated in \eqref{eqn:noisecor}, the covariance matrix for $\bm{\eta}_k, k=\{1,2,\cdots,K\}$, becomes $\bm{\Sigma}_{\bm{\eta}_k}=\mathbb{E}[\bm{\eta}\bm{\eta}^\dagger]=\sigma_0^2\bm{G}$, where $\dagger$ denotes Hermitian conjugation. Manipulating \eqref{eqn:yk} further, we can write the MIMO FTN signaling model as
	\begin{eqnarray}
		\left[\begin{matrix}
			\bm{y}_1 \\ \vdots \\ \bm{y}_K
		\end{matrix}\right] & =&  	\left[\begin{matrix}
			h_{11}\bm{G}& \ldots &h_{1L}\bm{G} \\
			\vdots & \ddots & \vdots \\
			h_{K1}\bm{G}& \ldots & h_{KL}\bm{G}
		\end{matrix}\right] 	\left[\begin{matrix}
			\bm{a}_1 \\ \vdots \\ \bm{a}_L
		\end{matrix}\right] + 	\left[\begin{matrix}
			\bm{\eta}_1 \\ \vdots \\ \bm{\eta}_K
		\end{matrix}\right] \end{eqnarray}\begin{eqnarray}
		\quad\bm{Y}& =& (\bm{\tilde{H}}\otimes\bm{G})\bm{A}+\bm{\Omega} \\
		\quad\bm{Y}& =& \bm{H}\bm{A}+\bm{\Omega}, \label{eqn:MIMOmodel}
	\end{eqnarray} where $\bm{y}_k\in\mathbb{C}^{N\times1}$,  $\bm{a}_l\in\mathbb{C}^{N\times1}$, $\bm{\eta}_k\in\mathbb{C}^{N\times1}$, $\bm{Y} = [\bm{y}_1^T,\cdots \bm{y}_K^T]^T$, $\bm{A} = [\bm{a}_1^T,\cdots \bm{a}_L^T]^T$, $\bm{\Omega} = [\bm{\eta}_1^T,\cdots \bm{\eta}_K^T]^T$, $\bm{H}=\bm{\tilde{H}}\otimes\bm{G}$ and $\otimes$ is the Kronecker product.

Before we proceed with the next section, we have the following lemmas for the positive definiteness of $\bm{G}$. 
	\begin{lemma}\cite{matrixbook}\label{lem:Gposemidef}
		As the matrix $\bm{G}$ is Gramian, it is always positive semi-definite.
	\end{lemma}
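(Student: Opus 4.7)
The statement is a standard linear-algebra fact whose proof is essentially one line once the Gramian structure of $\bm{G}$ is made explicit. The plan is therefore to exhibit $\bm{G}$ as a genuine Gram matrix of shifted pulses in $L^2(\mathbb{R})$ and then evaluate the associated Hermitian form directly.

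First, I would introduce the shifted pulses $q_m(t) \triangleq p(t-m\delta T)$ for $m=1,\ldots,N$, regarded as vectors in $L^2(\mathbb{R})$ with the usual inner product $\langle f,h\rangle = \int f(t) h^*(t)\, dt$. Using the paper's definition $g(t) = p(t)\star p^*(-t)$ and performing the change of variables $\tau = t - m\delta T$ in the convolution integral, one checks that
\begin{equation}
(\bm{G})_{n,m} = g((n-m)\delta T) = \int_{-\infty}^{\infty} p(t-m\delta T)\, p^*(t-n\delta T)\, dt = \langle q_m, q_n\rangle,
\end{equation}
so $\bm{G}$ is literally the Gram matrix of the family $\{q_1,\ldots,q_N\}$, consistent with the remark preceding the lemma.

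Second, for an arbitrary $\bm{v}\in\mathbb{C}^N$ I would evaluate the Hermitian form and interchange the sum with the integral:
\begin{equation}
\bm{v}^\dagger \bm{G}\, \bm{v} \;=\; \sum_{n=1}^{N}\sum_{m=1}^{N} v_n^{*}\, \langle q_m, q_n\rangle\, v_m \;=\; \int_{-\infty}^{\infty} \Big|\sum_{m=1}^{N} v_m\, p(t-m\delta T)\Big|^{2} dt \;\geq\; 0.
\end{equation}
Since this nonnegativity holds for every $\bm{v}\in\mathbb{C}^N$, the matrix $\bm{G}$ is positive semi-definite by definition.

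The only mildly delicate point is the bookkeeping of the complex conjugation and the sign of the time shift when converting $g((n-m)\delta T)$ into the inner product $\langle q_m,q_n\rangle$; the convention $g(t) = p(t)\star p^*(-t)$ entangles a conjugation with a time reversal, so one must verify that the indices $n$ and $m$ land in the correct slots of the inner product. There is no real obstacle beyond this — alternatively, since the result is textbook, it could simply be invoked from \cite{matrixbook} without further computation.
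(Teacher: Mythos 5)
Your proof is correct. The paper itself offers no argument for this lemma beyond a citation to a matrix-theory textbook, so there is no in-text proof to compare against; what you have written out is precisely the standard textbook argument, correctly specialized to the pulse-shaping setting. Your bookkeeping is right: with $g(t)=p(t)\star p^{*}(-t)$ one has $g(t)=\int p(\tau)p^{*}(\tau-t)\,d\tau$, and the substitution $\tau=t'-m\delta T$ gives $(\bm{G})_{n,m}=\int p(t'-m\delta T)\,p^{*}(t'-n\delta T)\,dt'=\langle q_m,q_n\rangle$, after which $\bm{v}^{\dagger}\bm{G}\bm{v}=\big\|\sum_m v_m q_m\big\|_{L^2}^{2}\ge 0$ is immediate. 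The only thing worth noting is that this is exactly the proof the cited reference would supply, so invoking the citation (as the paper does) and carrying out the computation (as you do) are not really different routes; yours simply makes the argument self-contained, which is a modest but genuine improvement in readability.
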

	\begin{proof}
		 The proof is provided in \cite{matrixbook}.
	\end{proof}
\begin{lemma}\cite[Proposition 5.1.1]{timeseries}\label{proposition511}
    If $g[0]>0$ and $g[N]\rightarrow0$ as $N\rightarrow\infty$ then the covariance matrix $\bm{\Sigma}_{\bm{\eta}_k}=\sigma_0^2g[i-j]_{i,j=1,\cdots,N}$ of $(\bm{\eta}_k[1],\cdots,\bm{\eta}_k[N])^T$ is non-singular for every $n$, for all $k$.
\end{lemma}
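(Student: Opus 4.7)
The plan is a classical proof by contradiction in the time-series style. Suppose that for some $N$ the matrix $\bm{\Sigma}_{\bm{\eta}_k}$ is singular, and let $r$ be the smallest such $N$. Since $g[0]>0$ makes the $1\times 1$ case trivially non-singular, one must have $r\ge 2$. Minimality of $r$ forces the leading $(r-1)\times(r-1)$ principal submatrix $\bm{\Gamma}_{r-1}$ to be invertible while $\bm{\Gamma}_r$ is not. Any null vector of $\bm{\Gamma}_r$ then has a non-zero last coordinate (otherwise $\bm{\Gamma}_{r-1}$ would already be singular), which translates into an $L^2$ identity
\begin{equation*}
\eta_k[r]=\sum_{j=1}^{r-1}b_j\,\eta_k[j]
\end{equation*}
for appropriate scalars $b_j$.

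Next I would invoke wide-sense stationarity of $\{\eta_k[n]\}$, which is evident from \eqref{eqn:noisecor} since the covariance depends only on the index difference. Shifting the identity above by an integer $m-r$, one obtains $\eta_k[m]=\sum_{j=1}^{r-1}b_j\,\eta_k[m-r+j]$ in $L^2$ for every $m\ge r$. Iterating this recursion forward expresses every $\eta_k[m]$ as some linear combination $\sum_{j=1}^{r-1}c_j^{(m)}\eta_k[j]$, so the entire sequence is forced into the fixed $(r-1)$-dimensional subspace $V=\mathrm{span}\{\eta_k[1],\dots,\eta_k[r-1]\}\subset L^2$.

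The contradiction then comes from a Schur-complement / best-linear-predictor calculation combined with the decay hypothesis $g[n]\to 0$. Let $\mathbf{g}_m=(g[m-1],\dots,g[m-r+1])^T$ denote the covariance vector between $\eta_k[m]$ and $(\eta_k[1],\dots,\eta_k[r-1])^T$. The optimal linear predictor of $\eta_k[m]$ from $V$ has mean-squared error $\sigma_m^2 = g[0]-\mathbf{g}_m^{\dagger}\bm{\Gamma}_{r-1}^{-1}\mathbf{g}_m$. Because $\eta_k[m]\in V$, this predictor is exact and $\sigma_m^2=0$ for every $m$; on the other hand, $g[n]\to 0$ forces $\mathbf{g}_m\to\mathbf{0}$ as $m\to\infty$, hence $\sigma_m^2\to g[0]>0$. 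These two statements are incompatible, so no such $r$ exists and $\bm{\Sigma}_{\bm{\eta}_k}$ is non-singular for every $N$ and every $k$.

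The main obstacle I anticipate is the second step: justifying that the deterministic $L^2$ relation can be translated under the (possibly complex-valued) stationary law, and that the recursive substitution terminates with coefficients of the original generators rather than producing new degrees of freedom. Once that is in hand, the Schur-complement identity is essentially mechanical, and the decay hypothesis is only invoked at the very last moment to push $\sigma_m^2$ back to $g[0]$.
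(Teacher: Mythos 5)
Your proof is correct and follows essentially the same contradiction argument as the Brockwell--Davis proof of Proposition 5.1.1, which the paper cites in lieu of reproducing a proof of its own: take the smallest singular order $r$, extract the $L^2$ dependence $\eta_k[r]=\sum_{j<r}b_j\eta_k[j]$, propagate it by stationarity to trap the whole process in the fixed $(r-1)$-dimensional subspace $V$, and then let the prediction-error Schur complement collide with $g[0]>0$ as the cross-covariance vector $\mathbf{g}_m\to\mathbf{0}$. The concerns you raise at the end are easily discharged: wide-sense stationarity transfers the $L^2$ identity under time shifts by definition, and the forward substitution closes by a one-line induction on $m$ because each new $\eta_k[m]$ is expressed through terms already shown to lie in $V$.
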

\begin{proof}
        The proof is provided in \cite[Proposition 5.1.1]{timeseries}.
\end{proof}
If root raised cosine pulses with roll-off factor $\beta$ are used for $p(t)$, then $g(t)$ becomes a raised cosine pulse with the frequency response
	\begin{align*}
		G(f)=\left\{\begin{array}{ll}
			T, &|f|\leq\frac{1-\beta}{2T} \\
			\frac{T}{2}\left[1+\cos\left(\frac{\pi T}{\beta}\left(|f|-\frac{1-\beta}{2T}\right)\right)\right], &
		 \frac{1-\beta}{2T}<|f|\leq\frac{1+\beta}{2T}\\
			0, &\text{otherwise}
		\end{array}\right.,  
	\end{align*}
	which satisfies Lemma~\ref{proposition511} \cite{property}. Thus, we also have the following lemma.
{\begin{lemma}\cite{property,timelocalization}\label{lem:Gposdef}
		For a raised cosine pulse $g(t)$ with roll-off factor $\beta$, the matrix $\bm{G}$ is positive definite. However, when $\delta(1+\beta) < 1$, calculating $\bm{G}^{-1}$ is not numerically stable. 
	\end{lemma}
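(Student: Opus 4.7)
The plan is to split the lemma into its two assertions and address them in turn, leaning on Lemmas~\ref{lem:Gposemidef} and \ref{proposition511} for the first claim and on classical Toeplitz/Szeg\H{o} spectral theory for the second.

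For positive definiteness, I would first verify that a raised cosine pulse $g(t)$ with roll-off $\beta$ satisfies the hypotheses of Lemma~\ref{proposition511}. The unit-energy normalization of $p(t)$ gives $g(0)=\int |p(t)|^2 dt = 1 > 0$, and the explicit raised cosine time-domain expression decays like $1/t^3$, so $g(N\delta T)\to 0$ as $N\to\infty$. Lemma~\ref{proposition511} then guarantees that $\bm{\Sigma}_{\bm{\eta}_k}=\sigma_0^2\bm{G}$ is non-singular for every $N$, so $\bm{G}$ has no zero eigenvalue. Combining this with Lemma~\ref{lem:Gposemidef}, which gives positive semi-definiteness from the Gramian structure, yields positive definiteness.

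For the numerical stability claim, the plan is to analyze the spectrum of $\bm{G}$ using the Grenander--Szeg\H{o} theorem for Hermitian Toeplitz matrices. Since $(\bm{G})_{n,m}=g((n-m)\delta T)$, the generating function is the folded spectrum obtained by sampling $g(t)$ every $\delta T$ seconds, i.e., $\tilde G(f)=\tfrac{1}{\delta T}\sum_{k\in\mathbb{Z}} G\!\left(f-\tfrac{k}{\delta T}\right)$ on $f\in[-\tfrac{1}{2\delta T},\tfrac{1}{2\delta T}]$. The eigenvalues of $\bm{G}$ are asymptotically distributed according to $\tilde G(f)$; in particular, $\lambda_{\min}(\bm{G})\to \operatorname*{ess\,inf} \tilde G(f)$. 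The raised cosine $G(f)$ is supported on $|f|\le (1+\beta)/(2T)$. When $\delta(1+\beta)<1$, this support is strictly contained in $[-\tfrac{1}{2\delta T},\tfrac{1}{2\delta T}]$, so the shifted copies do not overlap and $\tilde G(f)$ vanishes on a set of positive measure around $f=\pm\tfrac{1}{2\delta T}$. Consequently $\lambda_{\min}(\bm{G})\to 0$ and the condition number $\kappa(\bm{G})\to\infty$, so inverting $\bm{G}$ becomes ill-conditioned for large $N$.

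The main obstacle I anticipate is making the second part rigorous while keeping the exposition compact: although $\bm{G}$ is strictly positive definite for every finite $N$, I need to argue that its smallest eigenvalue can be made arbitrarily small by taking $N$ large, rather than claiming outright singularity. The cleanest route is to cite the Szeg\H{o} limit theorem (or the weaker statement that $\lambda_{\min}(\bm{G})$ converges to the essential infimum of the symbol) and then identify the non-overlap of spectral replicas under the condition $\delta(1+\beta)<1$ as the source of the vanishing essential infimum. If a quantitative bound is desired, one can construct a test vector whose discrete-time Fourier transform is concentrated in the spectral gap and evaluate the Rayleigh quotient $\bm{v}^\dagger \bm{G}\bm{v}/\bm{v}^\dagger\bm{v}$, showing it tends to zero with $N$.
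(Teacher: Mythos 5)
Your proof is correct and in effect supplies the argument the paper leaves to the citations \cite{property, timelocalization, eigendecomposition}. For positive definiteness you follow the route the paper signposts in the sentence immediately before the lemma: verify that the raised cosine satisfies the hypotheses of Lemma~\ref{proposition511} and combine the resulting non-singularity with the semi-definiteness from Lemma~\ref{lem:Gposemidef}. For the instability claim, your folded-spectrum/Szeg\H{o} argument is precisely the machinery the paper develops later ($G_d(f_n)$ of \eqref{eqn:GdGrelation} is your $\tilde G$ up to normalization, and Lemma~\ref{lem:blktoep} is the block Szeg\H{o} theorem), so the approach is aligned. One refinement worth adding: because the raised cosine vanishes tangentially at $|f|=(1+\beta)/(2T)$, the essential infimum of the folded spectrum is already zero at $\delta(1+\beta)=1$ and $\lambda_{\min}(\bm{G})\to 0$ there as well; what the strict inequality $\delta(1+\beta)<1$ buys is a spectral gap of \emph{positive measure}, under which $\lambda_{\min}$ decays far faster and the conditioning degrades correspondingly. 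Making that rate distinction explicit is what justifies the lemma's singling out of the strict-inequality regime rather than the borderline case.
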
}
	\begin{proof}
		The proofs are provided in \cite{property,  timelocalization, eigendecomposition}. 
	\end{proof}

	\section{MIMO FTN Channel Capacity} \label{sec3}

In this section, we find the capacity expression for MIMO FTN transmission in frequency flat channels by using the equivalent discrete time model given in \eqref{eqn:MIMOmodel}. In the following subsection, we derive the related mutual information expression for MIMO FTN. Then, we will determine the power constraint and the capacity. 	
	
	\subsection{Mutual Information Expressions}\label{subsec3A}
Given the discrete time channel model in  \eqref{eqn:MIMOmodel}, we can write the channel capacity as \cite{Cover} 
	\begin{eqnarray}
		C &=& \lim_{N\rightarrow\infty}\max_{p(\bm{A}), \mathcal{P}}\frac{1}{N}I(\bm{Y};\bm{A}), \label{eqn:capacity} \\
		&=&\lim_{N\rightarrow\infty}\max_{p(\bm{A}), \mathcal{P}}\frac{1}{N}[h(\bm{Y})-h(\bm{\Omega})], \label{eqn:capacity2}
	\end{eqnarray} where $I(\bm{Y};\bm{A})$ denotes the mutual information between two random vectors $\bm{Y}$ and $\bm{A}$. The terms $h(\bm{Y})$ and $h(\bm{\Omega})$ respectively denote the differential entropy for $\bm{Y}$ and $\bm{\Omega}$, and $\mathcal{P}$ 
	is the power constraint, which we will explain in detail in Section~\ref{subsec3B}. 
	
	 Because the circularly symmetric complex Gaussian noise process on each receiving antenna ${\xi}_k(t)$ is independent of each other and also independent of input data, 
	samples of $\eta_i(t)$ and $\eta_j(t)$ are also independent; i.e.,  $\mathbb{E}[\eta_i[n]\eta_j^*[k]]=0, i\neq j, \forall k, n \in\mathbb{Z}$. We can then write the covariance matrix and the differential entropy for $\bm{\Omega}$ as 
		\begin{eqnarray}
		\bm{\Sigma_\Omega}&= & \sigma_0^2(\bm{I}_K\otimes\bm{G}), \label{eqn:SigmaOmega}\\
		h(\bm{\Omega})&=&\log_2\left[\left(2\pi e\right)^{KN}\left(\det(\bm{\Sigma_\Omega})\right)\right] , 
	\end{eqnarray} where $e$ is Euler's number, $\bm{I}_K$ is the $K$ by $K$ identity matrix and $\det(.)$ stands for the determinant of a matrix.
	
	\begin{lemma}\cite{matrix}\label{lem:kroneig}
		Let $\bm{\tilde{A}}\in \mathbb{C}^{A \times A} $ and $\bm{\tilde{B}} \in \mathbb{C}^{B \times B}$ be two Hermitian matrices respectively with the eigenvalues $\eta_1,\cdots,\eta_A$ and $\mu_{1},\cdots,\mu_{B}$. Then the matrix $\bm{\tilde{A}}\otimes\bm{\tilde{B}}$ has the eigenvalues $\eta_i\mu_j$, for $i=1,2,\cdots,A$ and $j=1,2,\cdots,B$.
	\end{lemma}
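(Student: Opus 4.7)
The plan is to exploit the spectral decomposition available for Hermitian matrices together with the mixed-product property of the Kronecker product, namely $(\bm{X}_1\otimes\bm{Y}_1)(\bm{X}_2\otimes\bm{Y}_2)=(\bm{X}_1\bm{X}_2)\otimes(\bm{Y}_1\bm{Y}_2)$. Since $\bm{\tilde{A}}$ and $\bm{\tilde{B}}$ are Hermitian, there exist unitary $\bm{U}_A,\bm{U}_B$ and real diagonal $\bm{\Lambda}_A=\mathrm{diag}(\eta_1,\ldots,\eta_A)$, $\bm{\Lambda}_B=\mathrm{diag}(\mu_1,\ldots,\mu_B)$ such that $\bm{\tilde{A}}=\bm{U}_A\bm{\Lambda}_A\bm{U}_A^\dagger$ and $\bm{\tilde{B}}=\bm{U}_B\bm{\Lambda}_B\bm{U}_B^\dagger$. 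Applying the mixed-product identity twice gives $\bm{\tilde{A}}\otimes\bm{\tilde{B}}=(\bm{U}_A\otimes\bm{U}_B)(\bm{\Lambda}_A\otimes\bm{\Lambda}_B)(\bm{U}_A^\dagger\otimes\bm{U}_B^\dagger)$, and a second application of the same identity shows $(\bm{U}_A\otimes\bm{U}_B)(\bm{U}_A\otimes\bm{U}_B)^\dagger=(\bm{U}_A\bm{U}_A^\dagger)\otimes(\bm{U}_B\bm{U}_B^\dagger)=\bm{I}_{AB}$, so $\bm{U}_A\otimes\bm{U}_B$ is unitary. Since $\bm{\Lambda}_A\otimes\bm{\Lambda}_B$ is diagonal with entries $\eta_i\mu_j$, this unitary similarity immediately reads off the full spectrum.

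As an even more elementary alternative, I would exhibit the eigenvectors directly. If $\bm{\tilde{A}}\bm{v}_i=\eta_i\bm{v}_i$ and $\bm{\tilde{B}}\bm{w}_j=\mu_j\bm{w}_j$, then the mixed-product property yields $(\bm{\tilde{A}}\otimes\bm{\tilde{B}})(\bm{v}_i\otimes\bm{w}_j)=(\bm{\tilde{A}}\bm{v}_i)\otimes(\bm{\tilde{B}}\bm{w}_j)=\eta_i\mu_j(\bm{v}_i\otimes\bm{w}_j)$, showing that $\bm{v}_i\otimes\bm{w}_j$ is an eigenvector with eigenvalue $\eta_i\mu_j$. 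A dimension count closes the argument: because $\{\bm{v}_i\}$ and $\{\bm{w}_j\}$ are orthonormal bases of $\mathbb{C}^A$ and $\mathbb{C}^B$, the $AB$ vectors $\{\bm{v}_i\otimes\bm{w}_j\}$ are linearly independent and span $\mathbb{C}^{AB}$, which exhausts the spectrum and rules out any additional eigenvalues. There is no real obstacle in this proof; the only point requiring mild care is verifying that the list $\{\eta_i\mu_j\}$ is complete, and the dimension/orthonormality argument above handles that cleanly.
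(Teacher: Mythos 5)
Your proof is correct, and since the paper itself only cites \cite{matrix} and does not reproduce an argument, there is no in-text proof to compare against; your derivation is the standard one that such references give. Both of your routes are sound: the spectral-decomposition version correctly uses the mixed-product property to show that $\bm{U}_A\otimes\bm{U}_B$ is unitary and that $\bm{\tilde{A}}\otimes\bm{\tilde{B}}$ is unitarily similar to the diagonal matrix $\bm{\Lambda}_A\otimes\bm{\Lambda}_B$, which immediately gives the spectrum with multiplicities. The second, more elementary route is also complete: you exhibit $\bm{v}_i\otimes\bm{w}_j$ as eigenvectors and then close the argument with the observation that the $AB$ tensor products of orthonormal bases form an orthonormal basis of $\mathbb{C}^{AB}$ (they are pairwise orthogonal since $\langle \bm{v}_i\otimes\bm{w}_j,\bm{v}_{i'}\otimes\bm{w}_{j'}\rangle=\langle\bm{v}_i,\bm{v}_{i'}\rangle\langle\bm{w}_j,\bm{w}_{j'}\rangle$), so the listed eigenvalues exhaust the spectrum counted with multiplicity. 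The one point worth flagging is that you invoke Hermitianness only to guarantee the unitary diagonalization; in fact the eigenvector argument works for any pair of diagonalizable matrices, and the general statement (for arbitrary square matrices, where the $\eta_i\mu_j$ are the eigenvalues counted with algebraic multiplicity) can be obtained via Schur triangularization. For the purposes of this paper, where $\bm{\tilde{A}}=\bm{I}_K$ and $\bm{\tilde{B}}=\bm{G}$ are Hermitian, your Hermitian version is exactly what is needed.
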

	\begin{proof}
		The proof is provided in \cite{matrix}.
	\end{proof}
	
	 \begin{remark}
	    The eigenvalues of the identity matrix $\bm{I_K}$ are all 1s. Then, according to Lemma~\ref{lem:kroneig}, the eigenvalues of $\bm{\Sigma_\Omega}$ are equal to the eigenvalues of $\bm{G}$ repeated $K$ times each. As a result, if $\bm{G}$ is positive definite, so is $\bm{\Sigma_\Omega}$.
	\end{remark}
	
	When the noise process is a stationary Gaussian process, the optimal input is also a stationary Gaussian process \cite{Cover}. It is also optimal to assume the input has zero mean. Then, the optimal input data sequence is a zero mean, stationary Gaussian random process. As both input and noise are Gaussian, the output $\bm{Y}$ is also Gaussian with the covariance matrix $\bm{\Sigma_Y}$, leading to the differential entropy                          
	\begin{eqnarray}
		h(\bm{Y})&=&\log_2\left[\left(2\pi e\right)^{KN}\left(\det(\bm{\Sigma_Y})\right)\right]. 
	\end{eqnarray}
By the definition of covariance, we can write 
	\begin{align}
		\bm{\Sigma_Y}&=\mathbb{E}[(\bm{H}\bm{A}+\bm{\Omega})(\bm{H}\bm{A}+\bm{\Omega})^\dagger] \notag\\ &=(\bm{\tilde{H}}\otimes\bm{G})\bm{\Sigma_A}(\bm{\tilde{H}}\otimes\bm{G})^\dagger+\sigma_0^2(\bm{I}_L\otimes\bm{G}),\label{eqn:covY}
	\end{align}where the $LN \times LN$ covariance matrix $\bm{\Sigma_A}$ is defined as
\begin{equation}
		\bm{\Sigma_A}=\left[\begin{matrix}
			\bm{K}_{\bm{a},11} & \bm{K}_{\bm{a},12}  &\cdots &\bm{K}_{\bm{a},1L} \\
			\bm{K}_{\bm{a},21} & \bm{K}_{\bm{a},22}  &\cdots &\bm{K}_{\bm{a},2L} \\
			\vdots &\vdots &\ddots &\vdots \\
			\bm{K}_{\bm{a},L1} & \bm{K}_{\bm{a},L2} &\cdots &\bm{K}_{\bm{a},LL} \\
		\end{matrix}\right], \label{eqn:covA}
	\end{equation} 
with $\bm{K}_{\bm{a},ij}=\mathbb{E}[\bm{a}_i\bm{a}_j^\dagger], i,j = 1,2,\cdots,L$. Then, the capacity expression in \eqref{eqn:capacity2} becomes equal to  
	\begin{equation}
		C=\lim_{N\rightarrow\infty}\max_{p(\bm{A}),\mathcal{P}}\frac{1}{N}\log_2\frac{\det(\bm{\Sigma_Y})}{\det(\bm{\Sigma_\Omega})}.
	\end{equation}
The derivation of MIMO FTN capacity expression is shown in \eqref{eqn:mimostepbeg}-\eqref{eqn:mimostepend}.   
	\begin{figure*}[t]
	\centering
		\begin{align}
			C&=\lim_{N\rightarrow\infty}\max_{p(\bm{A}), \mathcal{P}}\frac{1}{N}\log_2\frac{\det(\bm{\Sigma_Y})}{\det(\bm{\Sigma_\Omega})} \label{eqn:mimostepbeg}\\
			&=\lim_{N\rightarrow\infty}\max_{\bm{\Sigma_A}, \mathcal{P}}\frac{1}{N}\log_2\det\left[\left(\bm{\Sigma_\Omega}+(\bm{\tilde{H}}\otimes\bm{G})\bm{\Sigma_A}(\bm{\tilde{H}}\otimes\bm{G})^\dagger\right)\bm{\Sigma_\Omega}^{-1}\right]\\
			&\overset{(a)}{=}\lim_{N\rightarrow\infty}\max_{\bm{\Sigma_A}, \mathcal{P}}\frac{1}{N}\log_2\det\left[\bm{I}_{LN}+(\bm{\tilde{H}}\otimes\bm{G})\bm{\Sigma_A}(\bm{\tilde{H}}^\dagger\otimes\bm{I}_N)(\bm{I}_K\otimes\bm{G}^\dagger)\bm{\Sigma_\Omega}^{-1}\right] \label{eqn:inverseSigmaeta}\\
			&\overset{(b)}{=}\lim_{N\rightarrow\infty}\max_{\bm{\Sigma_A}, \mathcal{P}}\frac{1}{N}\log_2\det\left[\bm{I}_{LN}+\bm{\Sigma_A}(\bm{\tilde{H}}^\dagger\otimes\bm{I}_N)(\bm{I}_K\otimes\bm{G}^\dagger)\bm{\Sigma_\Omega}^{-1}(\bm{\tilde{H}}\otimes\bm{G})\right] \\
			&=\lim_{N\rightarrow\infty}\max_{\bm{\Sigma_A}, \mathcal{P}}\frac{1}{N}\log_2\det\left[\bm{I}_{LN}+\bm{\Sigma_A}(\bm{\tilde{H}}^\dagger\otimes\bm{I}_N)(\bm{I}_K\otimes\bm{G}^\dagger)(\bm{I}_K\otimes\bm{G}^{-1})\sigma_0^{-2}(\bm{\tilde{H}}\otimes\bm{G})\right] \\
			&\overset{(c)}{=}\lim_{N\rightarrow\infty}\max_{\bm{\Sigma_A}, \mathcal{P}}\frac{1}{N}\log_2\det\left[\bm{I}_{LN}+\sigma_0^{-2}\bm{\Sigma_A}(\bm{\tilde{H}}^\dagger\otimes\bm{I}_N)(\bm{\tilde{H}}\otimes\bm{G})\right] \\
			&\overset{(d)}{=}\lim_{N\rightarrow\infty}\max_{\bm{\Sigma_A}, \mathcal{P}}\frac{1}{N}\log_2\det\left[\bm{I}_{LN}+\sigma_0^{-2}\bm{\Sigma_A}(\bm{\tilde{H}}^\dagger\otimes\bm{I}_N)(\bm{\tilde{H}}\otimes\bm{I}_N)(\bm{I}_L\otimes\bm{G})\right] \\	
			&\overset{(e)}{=}\lim_{N\rightarrow\infty}\max_{\bm{\Sigma_A}, \mathcal{P}}\frac{1}{N}\log_2\det\left[\bm{I}_{LN}+\sigma_0^{-2}\bm{\Sigma_A}(\bm{\tilde{H}}^\dagger\bm{\tilde{H}}\otimes\bm{I}_N)(\bm{I}_L\otimes\bm{G})\right]\\
			&\overset{(f)}{=}\lim_{N\rightarrow\infty}\max_{\bm{\Sigma_A}, \mathcal{P}}\frac{1}{N}\log_2\det\left[\bm{I}_{LN}+\sigma_0^{-2}(\bm{I}_L\otimes\bm{G})\bm{\Sigma_A}(\bm{\tilde{H}}^\dagger\bm{\tilde{H}}\otimes\bm{I}_N)\right]\label{eqn:mimostepend}
		\end{align}
	\end{figure*}
	The steps (a), (c), (d) and (e) are because of the mixed product property of the Kronecker product, which states that an $A \times A$ matrix $\bm{A}$ and an $B \times B$ matrix $\bm{B}$ has the Kronecker product $(\bm{A}\otimes\bm{B})=(\bm{A}\otimes\bm{I}_B)(\bm{I}_A\otimes\bm{B})$\cite{matrixbook}. In (a), we also make use of the property $(\bm{\tilde{H}}\otimes\bm{G})^\dagger=(\bm{\tilde{H}}^\dagger\otimes\bm{G}^\dagger)$. The steps (b) and (f) are because of Sylvester's determinant identity \cite{matrix}. Moreover, notice that in (c), $(\bm{I}_K\otimes\bm{G}^\dagger)(\bm{I}_K\otimes\bm{G}^{-1})=(\bm{I}_K\bm{I}_K\otimes\bm{G}^\dagger\bm{G}^{-1})=(\bm{I}_K\otimes\bm{I}_N)=\bm{I}_{KN}$. To make sure that the matrix inverse $\bm{\Sigma_\Omega}^{-1}=\sigma_0^{-2}(\bm{I}_K\otimes\bm{G}^{-1})$ in \eqref{eqn:inverseSigmaeta} exists, we require $\bm{G}$ to be invertible. In other words, \eqref{eqn:mimostepend} is valid only if $\bm{G}$ is positive definite or all its eigenvalues are strictly positive.


	\subsection{Power Constraint} \label{subsec3B}
 In FTN signaling, the signal transmitted from the $l$th antenna, $x_l(t)$ has the following autocorrelation function $\hat{R}_{x_l}(t,t-\tau)$ 
\begin{align}
		&\hat{R}_{x_l}(t,t-\tau)\notag\\
		&=\sum_{m=-\infty}^{+\infty}\sum_{n=-\infty}^{+\infty}R_{a_l}[m-n]p(t-m\delta T)p^*(t-\tau-n\delta T),
	\end{align}where $R_{a_l}[n]$ is the autocorrelation function of the data sequence $a_l[n]$. One can observe that $\hat{R}_{x_l}(t,t-\tau)$ is a cyclostationary process. For cyclostationary processes, we need to find the average of the autocorrelation function over one period and evaluate its value at 0 to find the average power.    
	Then, the power of $x_l(t)$ becomes
	\begin{equation}
	P_{x_l}=R_{x_l}(0)=\mathbb{E}[x_l(t)x_l^*(t)]=\frac{1}{\delta T}\sum_{m=-\infty}^{+\infty}R_{a_l}[m]g[-m],
	\end{equation}
	and the total power $P_x$ equals the summation of all the $P_{x_l}$'s, namely,  
	\begin{align}
	P_x=\sum_{l=1}^{L}R_{x_l}(0)=\frac{1}{\delta T}\sum_{l=1}^{L}\sum_{m=-\infty}^{+\infty}R_{a_l}[m]g[-m]. \label{eq43}
	\end{align}
	If we express the total power constraint in a matrix multiplication form, we observe that the following limit is also equal to $P_x$, as
	\begin{align}
		&\lim_{N\rightarrow\infty}\frac{1}{N\delta T}\text{tr}\left(\left[\begin{matrix}
			\bm{G} & 0  &\cdots &0 \\
			0 & \bm{G} &\cdots &0 \\
			\vdots & \vdots&\ddots &\vdots \\
			0 & 0  &\cdots &\bm{G} \\
		\end{matrix}\right]\bm{\Sigma_A}\right) \label{eqn:mimoconstr} \allowdisplaybreaks\\
		&=\lim_{N\rightarrow\infty}\frac{1}{N\delta T}\text{tr}\left((\bm{I}_L\otimes\bm{G})\bm{\Sigma_A}\right) \allowdisplaybreaks\\
		&= \lim_{N\rightarrow\infty}\frac{1}{\delta T}\sum_{l=0}^{L-1}\sum_{m=-(N-1)}^{N-1}R_{a_l}^*[m]g^*[-m]\frac{N-|m|}{N} \allowdisplaybreaks\\
		&= \frac{1}{\delta T}\sum_{l=0}^{L-1}\sum_{m=-\infty}^{+\infty}(R_{a_l}[m]g[-m])^* \\
		&=P_x.
	\end{align}In orthogonal transmission schemes, the input power constraint is simply a constraint on the trace of $\bm{\Sigma_A}$. However, in FTN the intentional ISI generated at the transmitter changes the power constraint and we observe that the power constraint is now on the trace of the product of $\bm{\Sigma_A}$ and $(\bm{I}_L\otimes\bm{G})$. 
Finally, the power constraint $\mathcal{P}$ in \eqref{eqn:capacity} becomes $P_x \leq P$. 

	\subsection{Capacity Calculation}
	\label{subsec3C}


	Combining \eqref{eqn:mimostepend} and \eqref{eqn:mimoconstr}, the capacity optimization problem for MIMO FTN becomes 
	\begin{subequations}\label{eqn:optMIMO}
		\begin{align}
			&C=\lim_{N\rightarrow\infty}\max_{\bm{\Sigma_A}}\frac{1}{N}\log_2\det\bigg(\bm{I}_{LN}+\sigma_0^{-2}(\bm{I}_L\otimes\bm{G})\bm{\Sigma_A} (\bm{\tilde{H}}^\dagger\bm{\tilde{H}}\otimes\bm{I}_N)\bigg) \label{eqn:FTNMIMOobj}\\
			&s.t.\quad\lim_{N\rightarrow\infty}\frac{1}{N\delta T}\text{tr}\left((\bm{I}_L\otimes\bm{G})\bm{\Sigma_A}\right) \leq P.\label{eqn:ftnmimopower}
		\end{align}
	\end{subequations}
	Define 
	\begin{eqnarray}
		\bm{K} & \triangleq & (\bm{I}_L\otimes\bm{G})\bm{\Sigma_A} \label{eqn:defk} \\
		\bm{H}_{eq} &\triangleq & (\bm{\tilde{H}}\otimes\bm{I}_N) \label{eqn:Heq}\\
		\bm{W}  & \triangleq & \bm{H}_{eq}^\dagger\bm{H}_{eq} \quad \notag\\
		&= & (\bm{\tilde{H}}\otimes\bm{I}_N)^\dagger(\bm{\tilde{H}}\otimes\bm{I}_N) \quad \notag\\
		&=& (\bm{\tilde{H}}^\dagger\bm{\tilde{H}})\otimes\bm{I}_N. \label{eqn:W}
	\end{eqnarray}
	Then the problem in \eqref{eqn:optMIMO} can be written as an optimization problem about the matrix $\bm{K}$ for the extended equivalent channel $\bm{H}_{eq}$ as
	\begin{subequations}\label{eqn:mimoFTN}
		\begin{align}	&C=\lim_{N\rightarrow\infty}C_N=\lim_{N\rightarrow\infty}\max_{\bm{K}}\frac{1}{N}\log_2\det\bigg(\bm{I}_{LN}+\sigma_0^{-2}\bm{K}\bm{W}\bigg) \label{eqn:mimocap} \\
			&s.t.\quad\lim_{N\rightarrow\infty}\frac{1}{N\delta T}\text{tr}(\bm{K}) \leq P.  \label{eqn:mimoconst}
		\end{align}
	\end{subequations}
	In order to solve this optimization problem, we should solve it for each $N$, and then take the limit as $N \rightarrow \infty$. For a specific $N$, the above optimization is equivalent to the well-known waterfilling problem for a regular MIMO channel with the channel gain matrix $\bm{H}_{eq}$ \cite{telatar}. 
	Thus, we will first solve for \eqref{eqn:mimoFTN} for a fixed $N$ to obtain $C_N$ in \eqref{eqn:mimocap} and then take $N$ to infinity. If $C_N$ converges to a constant, the limit will then be the capacity $C$.
	
	To find the solution for a fixed $N$, we write the eigen-decomposition for the composite covariance matrix $\bm{K}$ as  $\bm{K}=\bm{P_K}\bm{\Lambda_K}\bm{P_K}^{-1}$, where $\bm{P_K}$ is composed of the eigenvectors of $\bm{K}$ and $\bm{\Lambda_K}$ is a diagonal matrix with the eigenvalues of $\bm{K}$, $\alpha_1, \alpha_2, \cdots, \alpha_{NL}$, on the main diagonal. We also perform eigen-decomposition on $\bm{W}$ and write $\bm{W}=\bm{V_W\Lambda_WV_W}^\dagger$. Because $\bm{W}$ is Hermitian,  
	$\bm{V_W}$ is a unitary matrix and it is composed of the eigenvectors of $\bm{W}$. Moreover, $\bm{\Lambda_W}$ is a diagonal matrix with the eigenvalues of $\bm{W}$ on the main diagonal. We define
	\begin{equation}\bm{\tilde{H}}^\dagger\bm{\tilde{H}} \triangleq \bm{Z}=\bm{V_Z}\bm{\Lambda_Z}\bm{V_Z}^\dagger, \label{eqn:VZ} \end{equation} where the unitary matrix $\bm{V_Z}$ is composed of the eigenvectors of $\bm{Z}$. 
	Then, for a fixed $N$, we can write and bound the determinant in \eqref{eqn:mimocap} using Hadamard's inequality as \cite{Cover}
	\begin{align}
		C_N&=\max_{\bm{K}}\frac{1}{N}\log_2\det\bigg(\bm{I}_{LN}+\sigma_0^{-2}\bm{P_K}\bm{\Lambda_K}\bm{P_K}^{-1}\bm{V_W}\bm{\Lambda_W}\bm{V_W}^\dagger\bigg) \\
		&\leq\max_{\bm{\Lambda_K}}\frac{1}{N}\log_2\det\bigg(\bm{I}_{LN}+\sigma_0^{-2}\bm{\Lambda_K}\bm{\Lambda_W}\bigg).
	\end{align}
	The above inequality is achieved with equality when $\bm{KW}$ becomes equal to the diagonal matrix $\bm{\Lambda_K}\bm{\Lambda_W}$. Thus, we choose $\bm{P_K}=\bm{V_W}$. Note that, as $\bm{V_W}$ is unitary, this choice also makes the matrix $\bm{K}$ Hermitian.  

	Assuming the product matrix $\bm{\tilde{H}}^\dagger\bm{\tilde{H}}$  has eigenvalues $\tau_1,\cdots, \tau_L$, we can write the eigenvalues of 
	$\bm{W}$ in \eqref{eqn:W} due to Lemma~\ref{lem:kroneig}. As the identity matrix has all of its eigenvalues equal to 1, the eigenvalues of $\bm{W}$ repeat themselves $N$ times and we write $\text{eig}(\bm{W})_{j*N+i}=\tau_i$ for all $j=0,1,\cdots, L-1,  i=1,2,\cdots, N$. Then, we can write the optimization problem in the form of a summation of logarithms as 
	\begin{subequations}
		\begin{align}
			&C_N=\max_{\alpha_1, \alpha_2,\cdots,\alpha_{NL}}\frac{1}{N}\sum_{i=1}^{NL}\log_2\left(1+\sigma_0^{-2}\alpha_i\tau_{\lceil i/N\rceil}\right) \label{eqn:mimoftn2}\\
			&s.t.\quad\lim_{N\rightarrow\infty}\frac{1}{N\delta T}\sum_{i=1}^{NL}\alpha_i \leq P,
		\end{align}
	\end{subequations}
	where $\lceil i/N\rceil$ takes the ceiling of fraction $i/N$. 
	The solution to this problem is obtained via waterfilling as 
	\begin{align}
		\bar{\alpha}_i&=\sigma_0^2\left(\frac{1}{\mu}-\frac{1}{\tau_{\lceil i/N\rceil}}\right)^+, \label{eqn:optalpha}
	\end{align}
	where $i =1,2,\cdots,NL$, and $(a)^+=\max(a,0)$. Then, $\mu$ is solved by \begin{eqnarray}\sigma_0^2\sum_{i=1}^{NL}\left(\frac{1}{\mu}-\frac{1}{\tau_{\lceil i/N\rceil}}\right)^+=NP\delta T, \label{eqn:mimoftnpower}
	\end{eqnarray} and the optimal covariance matrix becomes \begin{equation}\bar{\bm{K}}=\bm{V_W}\bm{\bar{\Lambda}_K}\bm{V_W}^\dagger, \label{eqn:optKK}\end{equation}where $\bm{\bar{\Lambda}_K}=\text{diag}\{\bar{\alpha}_1,\bar{\alpha}_2,\cdots,\bar{\alpha}_{NL}\}$. 
	Using this optimal covariance matrix $\bar{\bm{K}}$ in \eqref{eqn:optKK} and \eqref{eqn:defk}, we can find the optimal input covariance matrix $\bm{\bar{\Sigma}_A}$ as
	\begin{equation}
		\bm{\bar{\Sigma}_A}=(\bm{I}_L\otimes\bm{G})^{-1}\bar{\bm{K}}. \label{eqn:solumimo}
	\end{equation}
At this stage, we have to guarantee that  $\bm{\bar{\Sigma}_A}$ is a valid covariance matrix, so it has to be both positive semidefinite and Hermitian. 
	\begin{lemma}\label{lem:solumimo}
		The matrix $\bm{\bar{\Sigma}_A}$ is always Hermitian and is positive definite. 
	\end{lemma}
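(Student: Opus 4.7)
The plan is to exploit the Kronecker structure of the equivalent channel matrix $\bm{W} = \bm{Z} \otimes \bm{I}_N$ to show that $\bar{\bm{K}}$ itself inherits a Kronecker factorization of the form $\bm{M} \otimes \bm{I}_N$, after which $(\bm{I}_L\otimes\bm{G})^{-1}$ acts only on the second factor and the whole expression collapses to a clean tensor product of two Hermitian positive (semi-)definite matrices.

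First I would pick a convenient eigenbasis for $\bm{W}$. Since $\bm{W}=\bm{Z}\otimes\bm{I}_N$ with $\bm{Z}=\bm{V_Z}\bm{\Lambda_Z}\bm{V_Z}^\dagger$ by \eqref{eqn:VZ}, a legitimate choice is $\bm{V_W}=\bm{V_Z}\otimes\bm{I}_N$, which is unitary and, by Lemma~\ref{lem:kroneig}, diagonalizes $\bm{W}$ with eigenvalues ordered so that $\tau_j$ appears $N$ times in block $j$. This matches precisely the ordering $\text{eig}(\bm{W})_{jN+i}=\tau_{\lceil (jN+i)/N\rceil}$ already used in \eqref{eqn:mimoftn2}.

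Next I would observe that the waterfilling solution \eqref{eqn:optalpha} depends on $i$ only through $\lceil i/N\rceil$, so $\bar{\alpha}_{(j-1)N+1}=\cdots=\bar{\alpha}_{jN}\triangleq d_j$. Consequently $\bm{\bar{\Lambda}_K}=\bm{D}\otimes\bm{I}_N$ with $\bm{D}=\text{diag}(d_1,\ldots,d_L)$ having non-negative real entries, and
\begin{align}
\bar{\bm{K}}&=(\bm{V_Z}\otimes\bm{I}_N)(\bm{D}\otimes\bm{I}_N)(\bm{V_Z}^\dagger\otimes\bm{I}_N) \notag\\
&=(\bm{V_Z}\bm{D}\bm{V_Z}^\dagger)\otimes\bm{I}_N \triangleq \bm{M}\otimes\bm{I}_N,
\end{align}
where $\bm{M}$ is Hermitian and positive semi-definite.

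Then the mixed product property and Lemma~\ref{lem:Gposdef} yield
\begin{align}
\bm{\bar{\Sigma}_A}=(\bm{I}_L\otimes\bm{G}^{-1})(\bm{M}\otimes\bm{I}_N)=\bm{M}\otimes\bm{G}^{-1},
\end{align}
which is Hermitian since $(\bm{M}\otimes\bm{G}^{-1})^\dagger=\bm{M}^\dagger\otimes\bm{G}^{-\dagger}=\bm{M}\otimes\bm{G}^{-1}$, and positive (semi-)definite since, by Lemma~\ref{lem:kroneig}, its eigenvalues are the products of eigenvalues of $\bm{M}$ and $\bm{G}^{-1}$, the latter being strictly positive by Lemma~\ref{lem:Gposdef}. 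The main obstacle I anticipate is cleanly matching the block-constant structure of $\bm{\bar{\Lambda}_K}$ to the ordering of eigenvalues in $\bm{V_W}$: both rely on an implicit indexing convention, and unless those conventions are aligned the collapse to $\bm{M}\otimes\bm{I}_N$ does not go through and the target matrix loses its Kronecker factorization.
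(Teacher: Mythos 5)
Your proof is correct and is essentially a cleaner variant of the paper's own argument in Appendix~A. Both proofs hinge on the same two freedoms: the mixed-product Kronecker identity and the fact that the eigenvector matrices of $\bm{I}_L$ and $\bm{I}_N$ can be chosen to be any unitary matrices. The paper chooses $\bm{V_W}=\bm{V_Z}\otimes\bm{V_G}$ and writes $\bm{\bar{\Sigma}_A}=(\bm{V_Z}\otimes\bm{V_G})(\bm{I}_L\otimes\bm{\Lambda_G}^{-1})\bm{\bar{\Lambda}_K}(\bm{V}_{\bm{Z}}\otimes\bm{V_G})^\dagger$, concluding Hermitian-ness and positive semidefiniteness by inspecting the diagonal middle factor. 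You instead choose $\bm{V_W}=\bm{V_Z}\otimes\bm{I}_N$ and additionally invoke the block-constant structure $\bm{\bar{\Lambda}_K}=\bm{D}\otimes\bm{I}_N$ (which the paper itself records as $\bm{\Lambda}_{\bar\alpha}\otimes\bm{I}_N$ after Remark~6 but does not use inside the appendix proof) to collapse directly to $\bm{\bar{\Sigma}_A}=\bm{M}\otimes\bm{G}^{-1}$ with $\bm{M}=\bm{V_Z}\bm{D}\bm{V_Z}^\dagger$; this is exactly the form the paper later reaches in \eqref{eqn:lll}. Your route buys a more transparent tensor factorization and a one-line appeal to Lemma~\ref{lem:kroneig} for the spectrum, at the small cost of having to explicitly verify the block-constancy of the waterfilling (which, as you note, follows immediately from $\tau_{\lceil i/N\rceil}$). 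The worry you flag about index-ordering alignment is not actually an obstacle: with $\bm{V_W}=\bm{V_Z}\otimes\bm{I}_N$ the eigenvalues of $\bm{W}$ appear as $\bm{\Lambda_Z}\otimes\bm{I}_N$, which is precisely the ordering $\text{eig}(\bm{W})_{jN+i}=\tau_j$ used in the paper, so the block structure of $\bm{\bar{\Lambda}_K}$ matches automatically. One last observation, which applies equally to the paper's own appendix: both arguments only establish positive \emph{semi}definiteness (since some $\bar{\alpha}_i$ may be zero after waterfilling), even though the lemma states ``positive definite''; your hedged ``positive (semi-)definite'' is actually the faithful conclusion.
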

	\begin{proof}
		The proof is provided in Appendix \ref{appE}.
	\end{proof}

	Overall, combining \eqref{eqn:mimoftn2}, \eqref{eqn:optalpha}, \eqref{eqn:solumimo} and Lemma~\ref{lem:solumimo}, we write that the maximum mutual information for a specific $N$ is equal to 
	\begin{equation}
		C_N=\sum_{i=1}^{L}\log_2(1+\sigma_0^{-2}\bar{\alpha}_{(i-1)N+1}\tau_{i}),\quad \text{when } \delta(1+\beta)\geq1.
		\label{eq56}
	\end{equation} 
	As \eqref{eq56} does not depend on $N$, we conclude that this solution applies to any $N$ and the MIMO FTN capacity is given in Theorem~\ref{thm:MIMOFTNC}.
	{\begin{remark}
	Note that Lemma~\ref{lem:solumimo} is valid for all $\delta$. However, due to Lemma~\ref{lem:Gposdef}, to provide numerical stability in the calculation of $\bm{G}^{-1}$, we need $\delta(1+\beta)\geq1$ in \eqref{eq56}.
	\end{remark}}
   \begin{theorem}\label{thm:MIMOFTNC}
		When a raised cosine pulse with roll-off factor $\beta$ is used as $g(t)$, the capacity of MIMO FTN signaling with acceleration factor $\delta$ becomes \begin{equation}
			C=\sum_{i=1}^{L}\log_2(1+\sigma_0^{-2}\bar{\alpha}_{(i-1)N+1}\tau_{i}),\quad \text{when } {\delta(1+\beta)\geq1} \label{eqn:capmimo}
		\end{equation}in bits/channel use. As we transmit $\frac{W}{\delta}$ symbols per second and the bandwidth is $(1+\beta)W$ Hz, we can rewrite \eqref{eqn:capmimo} to obtain
		\begin{equation}
		\bar{C}=\frac{1}{\delta(1+\beta)}\sum_{i=1}^{L}\log_2(1+\sigma_0^{-2}\bar{\alpha}_{(i-1)N+1}\tau_{i}), \quad  \text{when } {\delta(1+\beta)\geq1} \label{eqn:capmimobitshz}
		\end{equation}in bit/s/Hz.
	\end{theorem}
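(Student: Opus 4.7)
The plan is to assemble the theorem from the ingredients already built in Section~\ref{subsec3A}--\ref{subsec3C}: the capacity expression \eqref{eqn:mimocap}, the diagonalization via the eigen-decompositions of $\bm K$ and $\bm W$, the waterfilling solution \eqref{eqn:optalpha}--\eqref{eqn:mimoftnpower}, and the validity of $\bm{\bar\Sigma_A}$ guaranteed by Lemma~\ref{lem:solumimo}. What remains is to (i) evaluate $C_N$ at the waterfilling optimum in closed form, (ii) verify that $C_N$ does not depend on $N$ so that $\lim_{N\to\infty}C_N$ is trivially attained, and (iii) translate units from bits/channel use to bit/s/Hz.

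First I would use Lemma~\ref{lem:kroneig} on $\bm W=(\bm{\tilde H}^\dagger\bm{\tilde H})\otimes \bm I_N$ to conclude that the spectrum of $\bm W$ is $\{\tau_1,\dots,\tau_L\}$ with each $\tau_i$ repeated exactly $N$ times. Substituting into \eqref{eqn:mimoftn2} and applying the standard KKT conditions to the concave objective subject to the linear constraint yields the waterfilling rule \eqref{eqn:optalpha}. The key observation is that the waterlevel equation \eqref{eqn:mimoftnpower} becomes
\begin{equation}
\sigma_0^2\,N\sum_{i=1}^{L}\Bigl(\tfrac{1}{\mu}-\tfrac{1}{\tau_i}\Bigr)^{+}=NP\delta T,
\end{equation}
so the factor $N$ cancels and $\mu$ is independent of $N$. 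Hence $\bar\alpha_{(i-1)N+1}$ is independent of $N$ for each $i\in\{1,\dots,L\}$, and $\lceil i/N\rceil$ is constant on each block of $N$ consecutive indices. Collecting the $N$ identical terms in each block and dividing by $N$ gives
\begin{equation}
C_N=\sum_{i=1}^{L}\log_2\!\left(1+\sigma_0^{-2}\bar\alpha_{(i-1)N+1}\tau_i\right),
\end{equation}
which is \eqref{eq56}. Because this expression does not involve $N$, the limit in \eqref{eqn:mimocap} is achieved for every finite $N$, yielding \eqref{eqn:capmimo}. Achievability is ensured by Lemma~\ref{lem:solumimo}, which certifies $\bm{\bar\Sigma_A}=(\bm I_L\otimes\bm G)^{-1}\bar{\bm K}$ as a bona fide input covariance matrix, and by Lemma~\ref{lem:Gposdef}, which bounds the applicability to $\delta(1+\beta)\geq 1$ so that $\bm G^{-1}$ is numerically meaningful (the step \eqref{eqn:inverseSigmaeta} requiring $\bm\Sigma_\Omega^{-1}$ to exist).

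Finally, to convert from bits per channel use to bit/s/Hz, I would note that, for a raised cosine pulse with roll-off $\beta$, the transmitted signal occupies $(1+\beta)W$~Hz while the FTN signaling rate is $W/\delta$ symbols per second, so one channel use corresponds to $1/[\delta(1+\beta)]$ s$\cdot$Hz. Multiplying \eqref{eqn:capmimo} by this factor yields the spectral-efficiency form \eqref{eqn:capmimobitshz}.

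The main obstacle is the $N$-independence argument: one must be careful that the ordering of eigenvalues of $\bm W$ (implicit in the index map $\lceil i/N\rceil$) is consistent with the assignment of $\bar\alpha_i$, so that the waterfilling levels indeed decouple into $L$ parallel streams rather than entangling across $N$. Once the Kronecker structure of $\bm W=(\bm{\tilde H}^\dagger\bm{\tilde H})\otimes\bm I_N$ is exploited to show that each $\tau_i$ corresponds to $N$ identical scalar sub-channels, both the cancellation of $N$ in the waterfilling constraint and the collapse of $C_N$ to a finite sum over $L$ spatial modes follow without further analytic work.
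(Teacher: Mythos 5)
Your proposal follows the paper's own line of argument essentially step for step: diagonalize via Hadamard, invoke Lemma~\ref{lem:kroneig} on $\bm W=(\bm{\tilde H}^\dagger\bm{\tilde H})\otimes\bm I_N$ to get $L$ eigenvalues each of multiplicity $N$, solve the waterfilling problem \eqref{eqn:optalpha}--\eqref{eqn:mimoftnpower}, verify achievability of $\bm{\bar\Sigma}_A$ via Lemma~\ref{lem:solumimo}, and convert units. The only (helpful) difference is that you make explicit the cancellation of $N$ in the waterlevel equation that renders $\mu$, hence $C_N$ in \eqref{eq56}, independent of $N$, a point the paper states tersely.
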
	\begin{remark}	\label{rem6}
	  In \eqref{eqn:capmimo}, $\bar{\alpha}_i$ of \eqref{eqn:optalpha} depends on $\delta$, because of the power constraint in \eqref{eqn:mimoftnpower}. In other words, the acceleration factor is implicitly included in the result. 
	\end{remark}
	
\begin{remark}
	The optimal transmission scheme is equivalent to precoding in time and waterfilling in space. Precoding in time results in perfect ISI mitigation at the transmitter side.
	\end{remark}

	This remark is justified as follows. First, note that in \eqref{eqn:optalpha} the value for $\tau_{\lceil i/N \rceil}$ is the same for all groups of $i =\{1,\cdots,N\}$, $i = \{N+1,\cdots, 2N\}$, until $i= \{(N-1)L+1,\cdots,NL\}$, thus there are only $L$ distinct $\alpha_i$ values. The same $\bar{\alpha}_i$ values repeat themselves $N$ times consecutively, with $L$ groups in total in $\bm{\bar{\Lambda}_K}$ in \eqref{eqn:optKK}, and we can rewrite
	$\bm{\bar{\Lambda}_K}$ as $\bm{\bar{\Lambda}_K}=\bm{\Lambda}_{\bar{\alpha}}\otimes\bm{I}_N$, where $\bm{\Lambda}_{\bar{\alpha}}=\text{diag}\{\bar{\alpha}_1,\bar{\alpha}_{N+1}, \cdots, \bar{\alpha}_{N(L-1)+1}\}$ is the diagonal matrix composed of the distinct $\bar{\alpha}_i$ values.

As the details are shown in Appendix \ref{appE} in \eqref{eqn:eigoptsolu}, we can write  the optimal covariance matrix $\bm{\bar{\Sigma}_A}$ in  \eqref{eqn:solumimo} as
	\begin{align}
	    \bm{\bar{\Sigma}_A}&=(\bm{V_Z}\otimes\bm{V_G})(\bm{I}_L\otimes\bm{\Lambda_G}^{-1})\bm{\bar{\Lambda}_K}(\bm{V}_{\bm{Z}}\otimes\bm{V_G})^\dagger,
	\end{align} where
	 $\bm{V_G}\bm{\Lambda_G}^{-1}\bm{V_G}^\dagger$ is defined as the eigenvalue decomposition for $\bm{G}^{-1}$. Then, inserting $\bm{\bar{\Lambda}_K}=\bm{\Lambda}_{\bar{\alpha}}\otimes\bm{I}_N$ in the above equation, we obtain
	\begin{align}
	    \bm{\bar{\Sigma}_A}&=(\bm{V_Z}\otimes\bm{V_G})(\bm{I}_L\otimes\bm{\Lambda_G}^{-1})(\bm{\Lambda}_{\bar{\alpha}}\otimes\bm{I}_N)(\bm{V}_{\bm{Z}}\otimes\bm{V_G})^\dagger\\
	    &=(\bm{V_Z}\otimes\bm{V_G})(\bm{\Lambda}_{\bar{\alpha}}\otimes\bm{\Lambda_G}^{-1})(\bm{V}_{\bm{Z}}\otimes\bm{V_G})^\dagger \label{eqn:ll}\\
	    &=(\bm{V_Z}\bm{\Lambda}_{\bar{\alpha}}\bm{V_Z}^\dagger)\otimes\bm{G}^{-1}, \label{eqn:lll}
	\end{align}
as a valid covariance matrix due to Lemma~\ref{lem:solumimo}. Next, we insert the optimal input covariance matrix \eqref{eqn:lll} into the MIMO FTN capacity expression  \eqref{eqn:FTNMIMOobj} to obtain \eqref{eqn:begin}. Using the properties for the Kronecker product, we can simplify \eqref{eqn:begin} as in \eqref{eqn:chnlinverse}-\eqref{eqn:kkk} to obtain the final form for the MIMO FTN capacity expression as in \eqref{eqn:end}.
	\begin{figure*}[t]
		\centering
			\begin{align}
	    C&=\lim_{N\rightarrow\infty}\frac{1}{N}\log_2\det\left[\bm{I}_{LN}+\sigma_0^{-2}(\bm{I}_L\otimes\bm{G}) \left(\left(\bm{V_Z}\bm{\Lambda}_{\bar{\alpha}}\bm{V_Z}^\dagger\right)\otimes\bm{G}^{-1} \right) (\bm{\tilde{H}}^\dagger\bm{\tilde{H}}\otimes\bm{I}_N)\right] \label{eqn:begin}\\
		&=\lim_{N\rightarrow\infty}\frac{1}{N}\log_2\det\left[\bm{I}_{LN}+\sigma_0^{-2}\left(  \left(\bm{V_Z}\bm{\Lambda}_{\bar{\alpha}}\bm{V_Z}^\dagger\right)  \otimes  \bm{I}_N\right) \left(\bm{\tilde{H}}^\dagger\bm{\tilde{H}}\otimes\bm{I}_N\right)\right] \label{eqn:chnlinverse}\\
		&=\lim_{N\rightarrow\infty}\frac{1}{N}\log_2\det\left[\bm{I}_{L}\otimes\bm{I}_N+\sigma_0^{-2}\left( \left (\bm{V_Z}\bm{\Lambda}_{\bar{\alpha}}\bm{V_Z}^\dagger\right) \bm{Z} \otimes  \bm{I}_N\right)\right] \\
		&=\lim_{N\rightarrow\infty}\frac{1}{N}\log_2\det\left[\left( \bm{I}_{L}+\sigma_0^{-2} \left(\bm{V_Z}\bm{\Lambda}_{\bar{\alpha}}\bm{V_Z}^\dagger\right) \bm{Z} \right) \otimes  \bm{I}_N\right] \\
		&=\lim_{N\rightarrow\infty}\frac{1}{N}N\log_2\det\left[ \bm{I}_{L}+\sigma_0^{-2} \left(\bm{V_Z}\bm{\Lambda}_{\bar{\alpha}}\bm{\Lambda_Z}\bm{V_Z}^\dagger\right) \right] \label{eqn:kkk}\\
		&=\log_2\det\left( \bm{I}_{L}+\sigma_0^{-2} \bm{\Lambda}_{\bar{\alpha}}\bm{\Lambda_Z} \right) \label{eqn:end}
	\end{align}
	\end{figure*}
	
	This result in \eqref{eqn:end} clearly shows that the best achievable scheme is waterfilling for the MIMO channel. The best  $\bar{\alpha}_i$'s, are solely determined by the MIMO channel $\bm{\tilde{H}}$ in \eqref{eqn:optalpha}. Comparing \eqref{eqn:begin} with \eqref{eqn:chnlinverse}, we also observe that the effect of ISI is completely mitigated, the original input data $\bm{A}$ is \emph{precoded} for ISI removal via $\bm{G}^{-1}$ in \eqref{eqn:lll}, and the matrix $\bm{G}$ no longer appears in \eqref{eqn:chnlinverse}-\eqref{eqn:end}. 
	We also express that in MIMO FTN the symbols transmitted from each antenna have the same autocorrelation in time but only differ in their power levels, determined according to the MIMO channel $\bm{\tilde{H}}$. 
{
\begin{remark}
{In addition to the numerical instability of $\bm{G}^{-1}$ mentioned in Lemma \ref{lem:Gposdef}, there will be other hurdles when implementing FTN in practice. For example, practically ideal root raised cosine pulses are not possible, and truncation in time is performed. 
This truncation generally broadens the spectrum \cite{spectrumbroaden} more than it does in orthogonal signaling. Therefore, overall spectral efficiency calculations in FTN should take spectrum broadening into account as well.}
\end{remark}

Finally, note that a complete analysis which accounts for $\delta(1+\beta)< 1$, is a very complex problem. In \cite{R3asymp}, the authors prove that binary signaling is asymptotically optimal and achieves the capacity of FTN signaling with Gaussian inputs. In addition \cite{ganji}, claims a constant capacity for $\delta(1+\beta)< 1$, which is equal to the capacity achieved for $\delta = \frac{1}{1+\beta}$ for SISO FTN. This paper is a work in progress, and this issue of $\delta(1+\beta)< 1$ is one of our top priorities for future work. 
}


    \section{Capacity of MIMO FTN Signaling in Frequency Selective Channels}\label{sec:freqsel}

FS fading occurs when the channel spectrum is not constant over the transmission bandwidth \cite{goldsmith}. Frequency selectivity distorts the transmitted signal. It presents itself in time domain as multipath. Signal arrives at the receiver from different paths, each with a different delay and gain. This effect can be modeled by a tapped-delay-filter with $J$ taps. In this section we solve the capacity of MIMO FTN signaling in FS channels. 
	
Similar to the SISO model in {\cite{eigendecomposition}}, in MIMO, the signal component $x_{kl}(t)$ at the output of the matched filter at the $k$th receive antenna, due to the $l$th transmit antenna has the expression 
	 	\begin{equation}
		x_{kl}(t)=\sum_{n=0}^{N-1}\sum_{j=0}^{J-1}h^j_{kl}a_l[n]g(t-(j+n)\delta T), \label{eqn:fsfctimemodel}
	\end{equation}
	where $h^j_{kl}$ is the channel gain from the $l$th transmit antenna to the $k$th receive antenna for the $j$th tap. Note that multiple taps result in ISI, in addition to the ISI due to FTN signaling. We would like to express that, every channel from one transmit antenna to a receive antenna does not need to have equal number of taps. The number of taps $J$ is simply the maximum number of taps in all channels. For those channels with the number of taps less than $J$, we can assume the remaining channel coefficients are zero, and every channel has $J$ taps. If we sample this signal at $\delta T$ intervals, as in {\cite{eigendecomposition}}, the resulting $N$ samples can be written in vector form as  {\eqref{eqn:tildeGkl1}.}
	\begin{figure*}
	    \centering
    \begin{align}
		\bm{x}_{kl}&=
			\left[\begin{matrix}
				x_{kl}[0] \\	x_{kl}[1]\\ \vdots \\ 	x_{kl}[N-1] 
			\end{matrix}\right] \allowdisplaybreaks   
		 = \left[\begin{matrix}
			\sum_{j=0}^{J-1}h^j_{kl}g[-j] & \sum_{j=0}^{J-1}h^j_{kl}g[-1-j] & \cdots & \sum_{j=0}^{J-1}h^j_{kl}g[1-N-j] \\	\sum_{j=0}^{J-1}h^j_{kl}g[1-j] & \sum_{j=0}^{J-1}h^j_{kl}g[-j] & \cdots & \sum_{j=0}^{J-1}h^j_{kl}g[2-N-j]   \\ \vdots& \vdots &\ddots & \vdots \label{eqn:tildeGkl1}\\ 	\sum_{j=0}^{J-1}h^j_{kl}g[N-1-j] & \sum_{j=0}^{J-1}h^j_{kl}g[N-2-j] & \cdots & \sum_{j=0}^{J-1}h^j_{kl}g[-j] 
		\end{matrix}\right] \left[\begin{matrix}
		a_l[0] \\ a_l[1] \\ \vdots \\ a_l[N-1]
	\end{matrix}\right]
	\end{align}
	\end{figure*}
	We can also write $\bm{x}_{kl}$ as
	\begin{equation} 
	 	\bm{x}_{kl}=\tilde{\bm{G}}_{kl}\bm{a}_l. \label{eqn:tildeGkl}
	\end{equation}
	
	When we define $\bm{G}^j$, which is a $j$-\emph{shifted} version of $\bm{G}$, as
	\begin{equation}
		\bm{G}^j \triangleq \left[\begin{matrix}
			g[-j] & g[-1-j] & \cdots & g[1-N-j] \\	g[1-j] & g[-j] & \cdots & g[2-N-j]   \\ \vdots& \vdots &\ddots & \vdots \\ 	g[N-1-j] & g[N-2-j] & \cdots & g[-j] 
		\end{matrix}\right],\label{eqn:defGj}
	\end{equation} 
	with $\bm{G}^0=\bm{G}$, we have $\tilde{\bm{G}}_{kl}$ in \eqref{eqn:tildeGkl} as
	\begin{equation}
		\tilde{\bm{G}}_{kl}=\sum_{j=0}^{J-1}h^j_{kl}\bm{G}^j.
	\end{equation}
	Thus, the samples at the output of the matched filters at all receive antennas become
	\begin{align}
		\bm{Y}=&\left[\begin{matrix}
			\tilde{\bm{G}}_{11} & \tilde{\bm{G}}_{12} & \cdots & \tilde{\bm{G}}_{1L} \\	\tilde{\bm{G}}_{21} & \tilde{\bm{G}}_{22} & \cdots & \tilde{\bm{G}}_{2L}   \\ \vdots& \vdots &\ddots & \vdots \\ 	\tilde{\bm{G}}_{K1} & \tilde{\bm{G}}_{K2} & \cdots & \tilde{\bm{G}}_{KL} 
		\end{matrix}\right] \left[\begin{matrix}
		\bm{a}_1 \\ \bm{a}_2 \\ \vdots \\ \bm{a}_L 
	\end{matrix}\right] + \bm{\Omega} \\
	 =& \left[\sum_{j=0}^{J-1}(\tilde{\bm{H}}^j\otimes\bm{G}^j)\right]\bm{A} +\bm{\Omega}, \label{eqn:71} \\
	 \triangleq & \bm{G_H}\bm{A} +\bm{\Omega} \label{eqn:72}
	\end{align}
	where $(\tilde{\bm{H}}^j)_{k,l}=h^j_{kl}$.  
	As $\sum_{j=0}^{J-1}(\tilde{\bm{H}}^j\otimes\bm{G}^j)=\bm{G_H}$ in \eqref{eqn:72}, we derive the capacity expression in FS fading channels as in {\eqref{eqn:73}-\eqref{eqn:73_3}}, 
	\begin{figure*}[b]
	    \centering
\begin{align}
		C^{\text{FS}}&=\lim_{N\rightarrow\infty}\max_{\bm{\Sigma_A},\mathcal{P}}\frac{1}{N}\log_2\det\left[\bm{I}_{KN}+\sigma^{-2}_0\bm{G_H}\bm{\Sigma_A}\bm{G_H}^\dagger(\bm{I}_K\otimes\bm{G}^{-1})\right] \label{eqn:73}\\
		&= \lim_{N\rightarrow\infty}\max_{\bm{\Sigma_A},\mathcal{P}}\frac{1}{N}\log_2\det\left[\bm{I}_{LN}+\sigma^{-2}_0\bm{\Sigma_A}\bm{G_H}^\dagger(\bm{I}_K\otimes\bm{G}^{-\frac{1}{2}})(\bm{I}_K\otimes\bm{G}^{-\frac{1}{2}})\bm{G_H}\right] \\
		&= \lim_{N\rightarrow\infty}\max_{\bm{\Sigma_A},\mathcal{P}}\frac{1}{N}\log_2\det\left[\bm{I}_{LN}+\sigma^{-2}_0\bm{\Sigma_A}\bm{\Phi}^\dagger\bm{\Phi}\right] \label{eqn:73_3}
	\end{align}
	\end{figure*}
	where $\bm{\Phi}\triangleq (\bm{I}_K\otimes\bm{G}^{-\frac{1}{2}})\bm{G_H}$. We also assume that $\bm{G}$ is invertible in \eqref{eqn:73}. Because $\bm{\Phi}^\dagger\bm{\Phi}$ is a Hermitian matrix, we can write its eigenvalue decomposition as $\bm{\Phi}^\dagger\bm{\Phi}=\bm{U_\Phi}\bm{\Lambda_\Phi}\bm{U_\Phi}^\dagger$. Following the same approach as in {\cite{eigendecomposition}}, we first solve the optimization problem for a fixed $N$ and then send $N$ to infinity, resulting in the capacity if the limit exists. Applying Hadamard's inequality \cite{Cover} to get an upper bound on $C^{\text{FS}}_N$, we write
	\begin{align}
		C^{\text{FS}}_N&\leq \frac{1}{N}\log_2\det\bigg(\bm{I}_{LN}+\sigma^{-2}_0\bm{\Lambda_\Sigma}\bm{\Lambda_\Phi}\bigg), \label{eqn:Hadamard} \\
		&= \frac{1}{N}\sum_{i=0}^{LN-1}\log_2(1+\sigma^{-2}_0{\lambda}_i{\phi}_i).
	\end{align}
	where the upper bound is achieved when $\bm{\Sigma_A}$ can be decomposed into $\bm{U_\Phi}\bm{\Lambda_\Sigma}\bm{U_\Phi}^\dagger$, ${\lambda}_i$ and ${\phi}_i$ are eigenvalues of $\bm{\Sigma_A}$ and $\bm{\Phi}$ respectively.
	
	The power constraint for FS channels remains the same as in \eqref{eqn:ftnmimopower}. When \eqref{eqn:Hadamard} is achieved with equality, we can manipulate the power constraint as follows
	\begin{align}
		\lim_{N\rightarrow\infty}\frac{1}{N\delta T}\text{tr}\big((\bm{I}_L\otimes\bm{G})\bm{\Sigma_A}\big)&=\lim_{N\rightarrow\infty}\frac{1}{N\delta T}\text{tr}\big(\bm{U_\Phi}^\dagger(\bm{I}_L\otimes\bm{G})\bm{U_\Phi}\bm{\Lambda_\Sigma}\big) \label{eqn:fsfcdetpowconst}\\
		&=\lim_{N\rightarrow\infty}\frac{1}{N\delta T}\text{tr}(\bm{\Psi}\bm{\Lambda_\Sigma})\\ &=\lim_{N\rightarrow\infty}\frac{1}{N\delta T}\sum_{i=0}^{LN-1}{\psi}_i{\lambda}_i \leq P, \label{eqn:fsfcpowconst}
	\end{align}
	where  $\bm{\Psi} \triangleq \bm{U_\Phi}^\dagger(\bm{I}_L\otimes\bm{G})\bm{U_\Phi}$, and  $\psi_i$'s are the diagonal entries of matrix the $\bm{\Psi}$. Note that $\bm{\Psi}$ is not necessarily a diagonal matrix. The optimization problem then becomes 
	\begin{align}
		&C^{\text{FS}}_N=\max_{{\lambda}_i}\frac{1}{N}\sum_{i=0}^{LN-1}\log_2(1+\sigma^{-2}_0{\lambda}_i{\phi}_i) \\
		&s.t. ~~\frac{1}{N\delta T}\sum_{i=0}^{LN-1}{\psi}_i{\lambda}_i \leq P.
	\end{align}
	Similar to the SISO case in {\cite{eigendecomposition}}, we solve this problem by the Lagrange multiplier method, forming the Lagrangian function and by using Kuhn-Tucker conditions to find the optimal ${\lambda}^*_i$'s as
	\begin{equation}
		{\lambda}^*_i=\max\left(\frac{\delta T\ln2}{\mu {\psi}_i}-\frac{\sigma_0^2}{{\phi}_i},0\right). 
	\end{equation}
As a result, assuming the limit exists, the capacity for FTN signaling in FS MIMO channels become
	\begin{equation}
		C^{\text{FS}} = \lim_{N\rightarrow\infty}\frac{1}{N}\sum_{i=0}^{LN-1}\log_2(1+\sigma^{-2}_0{\lambda}^*_i{\phi}_i).
	\end{equation}
	{The parameters $\phi_i$'s and $\psi_i$'s are determined by the tap coefficients of the MIMO channel as well as the $\bm{G}$ matrix. In other words, all factors in space and frequency have an effect on the optimal transmission power level.   In FS MIMO FTN, in addition to the effect of multiple antennas, both the frequency spectrum of the channel and the frequency spectrum due to FTN ISI, have an influence on the transmitted signal. The optimal transmission scheme includes joint waterfilling in spatial and frequency domains, this is illustrated in more details in the following section.}

{
\section{Analysis in Frequency Domain}{\label{sec:freqdomanalysis}}

We now analyse the above channel capacities in the frequency domain. In the following discussion we still assume $\delta(1+\beta) \geq 1$.

\subsection{Spectrum Analysis of FS MIMO Channels with FTN}\label{subsec:freqdomA}

{We start this subsection with the following lemma.
\begin{lemma}\label{lem:blktoep}
    \cite[Theorem 3]{asynnoma} The generalized Szeg\"o's theorem states that
\begin{equation}
    \underset{N\rightarrow\infty}{\lim}\frac{1}{N}\sum_{l=1}^{NL}F[\lambda_l(\bm{R})] = \int_{-\frac{1}{2}}^{\frac{1}{2}}\sum_{j=1}^{L} F[\lambda_j(\bm{R}(f_n))]df_n, \label{eqn:genszego}
\end{equation}
where $F[.]$ is any function continuous on the range of $f_n$, $\lambda_l(\mathbf{R})$ is the $l$th eigenvalue of block Toeplitz matrix $\bm{R}$ and $\lambda_j(\bm{R}(f_n))$ is the $j$th eigenvalue of matrix $\bm{R}(f_n)$. The matrix $\bm{R}$ is of size $NL \times NL$ and has the form 
\begin{equation}
    \bm{R}=\left[\begin{matrix} \bm{R}_{11} & \bm{R}_{12} & \dots& \bm{R}_{1L} \\
    \bm{R}_{21} & \bm{R}_{22} & \dots& \bm{R}_{2L} \\
    \vdots & \vdots & \ddots& \vdots \\
    \bm{R}_{L1} & \bm{R}_{L2} & \dots & \bm{R}_{LL}
    \end{matrix}\right],
\end{equation}
where $\bm{R}_{ij}$'s are  $N\times N$ Toeplitz matrices individually. The matrix $\bm{R}(f_n)$ is called the generating matrix of $\bm{R}$, since it is composed of the generating functions of $\bm{R}_{ij}$'s, namely, $(\bm{R}(f_n))_{ij}=\mathcal{G}_{ij}(f_n)$, where $\mathcal{G}_{ij}(f_n)$ is the generating function of $\bm{R}_{ij}$, $i,j=1,\dots,L$. {For an arbitrary Toeplitz matrix $\bm{T}_N$, with $(\bm{T}_N)_{ij}=t_{i-j}$, 
its generating function is defined as \cite[Theorem 3]{asynnoma} 
\begin{eqnarray}
\mathcal{G}(\bm{T}_N)=\sum_{k=-\infty}^{\infty}t_ke^{jkf_n}, \quad f_n\in\left[-\frac{1}{2},\frac{1}{2}\right].
\end{eqnarray}} 
\end{lemma}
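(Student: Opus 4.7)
The plan is to follow the classical two-step strategy for Szeg\"o-type theorems: first prove the identity for monomials $F(x)=x^k$ by a direct trace computation, then extend to continuous $F$ by polynomial approximation. The second step is almost automatic once one knows that the eigenvalues of $\bm{R}$ are uniformly bounded in $N$ (which follows because the operator norm of $\bm{R}$ is controlled by $\sup_{f_n}\|\bm{R}(f_n)\|$ when the generating functions $\mathcal{G}_{ij}$ are bounded), so by the Weierstrass approximation theorem it suffices to verify \eqref{eqn:genszego} for $F(x)=x^k$, $k=0,1,2,\dots$.

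For the monomial case, I would rewrite the left-hand side as $\lim_{N\to\infty}\tfrac{1}{N}\mathrm{tr}(\bm{R}^k)$. Indexing entries of $\bm{R}$ by pairs $(i,a)$ with $i\in\{1,\dots,L\}$, $a\in\{1,\dots,N\}$, the block Toeplitz structure gives $(\bm{R})_{(i,a),(j,b)}=r_{ij,\,a-b}$, where $r_{ij,m}$ is the $m$th Fourier coefficient of $\mathcal{G}_{ij}(f_n)$. Expanding $\mathrm{tr}(\bm{R}^k)$ therefore produces a multiple sum of cyclic products of these Fourier coefficients. The next step is to compare $\bm{R}$ with the block circulant matrix $\bm{C}$ obtained by replacing each scalar Toeplitz block $\bm{R}_{ij}$ by the circulant matrix $\bm{C}_{ij}$ sharing the same generating function; by Gray's asymptotic-equivalence argument, $\tfrac{1}{N}\mathrm{tr}(\bm{R}^k-\bm{C}^k)\to 0$, because each replacement modifies only $O(1)$ entries near the corners of each block and the accumulated error in $k$-fold products remains $o(N)$.

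The point of this replacement is that $\bm{C}$ is exactly diagonalizable. Conjugating by $\bm{I}_L\otimes \bm{F}_N$, where $\bm{F}_N$ is the DFT matrix, turns $\bm{C}$ into the block-diagonal matrix $\mathrm{diag}\bigl(\bm{R}(n/N)\bigr)_{n=1}^{N}$, so
\begin{equation}
\tfrac{1}{N}\mathrm{tr}(\bm{C}^k)=\tfrac{1}{N}\sum_{n=1}^{N}\mathrm{tr}\bigl(\bm{R}(n/N)^k\bigr)=\tfrac{1}{N}\sum_{n=1}^{N}\sum_{j=1}^{L}\lambda_j\bigl(\bm{R}(n/N)\bigr)^k.
\end{equation}
This is a Riemann sum for the continuous integrand $\sum_{j=1}^{L}\lambda_j(\bm{R}(f_n))^k$, which converges to the right-hand side of \eqref{eqn:genszego} as $N\to\infty$. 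Combined with the asymptotic equivalence $\tfrac{1}{N}\mathrm{tr}(\bm{R}^k)-\tfrac{1}{N}\mathrm{tr}(\bm{C}^k)\to 0$, this establishes \eqref{eqn:genszego} for $F(x)=x^k$, and hence for all continuous $F$.

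The main obstacle I would expect is the asymptotic-equivalence estimate between $\bm{R}^k$ and $\bm{C}^k$ at the block level. For scalar Toeplitz matrices the argument is standard, but with $L\times L$ blocks one must control products of the form $\bm{R}_{i_1 i_2}\bm{R}_{i_2 i_3}\cdots\bm{R}_{i_k i_1}$ uniformly in the intermediate indices; the cleanest way is to bound the Hilbert--Schmidt norm $\|\bm{R}_{ij}-\bm{C}_{ij}\|_{\mathrm{HS}}=O(1)$ independently of $N$, then use H\"older-type inequalities on Schatten norms to show that the trace of any $k$-fold product differs from its circulant counterpart by $O(1)$, so that after dividing by $N$ the error vanishes. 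Once this technical estimate is in place, everything else is Fourier analysis on the torus and a routine Riemann-sum limit.
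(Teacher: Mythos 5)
The paper does not actually prove Lemma~\ref{lem:blktoep}: the generalized Szeg\"o theorem is imported wholesale as \cite[Theorem 3]{asynnoma} with no accompanying derivation, so your proposal tackles something the paper never attempts. As a sketch it is the standard Gray-style route and it is sound: reduce to monomials $F(x)=x^k$ via Weierstrass (valid because boundedness of the generating matrices $\bm{R}(f_n)$ gives a uniform-in-$N$ bound on the spectrum of $\bm{R}$), replace each Toeplitz block by a circulant with the same symbol, invoke asymptotic equivalence so that $\tfrac{1}{N}\mathrm{tr}(\bm{R}^k)-\tfrac{1}{N}\mathrm{tr}(\bm{C}^k)\to 0$, diagonalize the block circulant by the DFT, and read off a Riemann sum for $\int_{-1/2}^{1/2}\sum_j\lambda_j(\bm{R}(f_n))^k\,df_n$. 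Two small imprecisions that do not break the argument but would need cleaning up in a full write-up: (i) the circulant replacement does not alter only $O(1)$ entries per block --- it changes an entire off-diagonal triangle, and the correct statement is that $\tfrac{1}{N}\|\bm{R}_{ij}-\bm{C}_{ij}\|_F^2\to 0$ for Wiener-class symbols, with the ``$O(1)$ entries'' picture emerging only after a preliminary banded truncation of the generating functions, which you omitted; and (ii) $(\bm{I}_L\otimes\bm{F}_N)$-conjugation does not literally yield $\mathrm{diag}\bigl(\bm{R}(n/N)\bigr)_n$ but an $L\times L$ array of $N\times N$ diagonal blocks that is only permutation-similar to that block-diagonal form --- harmless for traces, but the extra permutation should be acknowledged. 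Overall your plan is correct and amounts to reproving what the paper outsources to the literature.
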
}
In order to express the mutual information in \eqref{eqn:73} in the frequency domain, we first need to verify that Lemma \ref{lem:blktoep}
can be applied, in other words, the matrix inside the determinant in \eqref{eqn:73} is a block Toeplitz matrix. First of all, it is easy to see that $\bm{G_H}$, $(\bm{I}_K\otimes\bm{G}^{-1})$ and $\bm{\Sigma_A}$ are  block Toeplitz matrices when $N\rightarrow\infty$\footnote{According to \cite{gray}, since the inverse of a Toeplitz matrix is asymptotically Toeplitz, $\bm{I}_K\otimes\bm{G}^{-1}$ is also an asymptotically block Toeplitz  matrix.}. 
{ According to \cite{blocktoep}, we have the following lemma on the product of block Toeplitz matrices}
\begin{lemma}{\cite[Theorem 2]{blocktoep}} \label{lem:blktoepprod}
The product of  block Toeplitz matrices is asymptotically a block Toeplitz matrix. 
\end{lemma}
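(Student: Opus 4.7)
The plan is to emulate Gray's proof of the scalar Toeplitz product theorem \cite{gray} and lift it to the block setting via the generating-function calculus introduced in Lemma~\ref{lem:blktoep}. Concretely, given two block Toeplitz matrices $\bm{A}$ and $\bm{B}$ of size $NL\times NL$ with $L\times L$ blocks $\bm{a}_k$ and $\bm{b}_k$ along their $k$th block diagonals, I would associate to each its generating matrix $\mathcal{G}_A(f_n)=\sum_k \bm{a}_k e^{jkf_n}$ and $\mathcal{G}_B(f_n)=\sum_k \bm{b}_k e^{jkf_n}$, and then construct the corresponding block circulant approximants $\bm{C}_A$ and $\bm{C}_B$ whose block DFTs sample these generating matrices on the uniform grid $\{f_n^{(m)}=m/N\}_{m=0}^{N-1}$.

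The first key step is to establish the block analogue of the Toeplitz--circulant asymptotic equivalence, namely $\|\bm{A}-\bm{C}_A\|_{HS}/\sqrt{NL}\to 0$ and similarly for $\bm{B}$, under the mild assumption that the block sequences $\{\bm{a}_k\},\{\bm{b}_k\}$ are absolutely summable in operator norm. In our application $\bm{A}$ and $\bm{B}$ will be of the form $\bm{I}_K\otimes\bm{G}^{-1}$, $\bm{G_H}$, $\bm{\Sigma_A}$, so this decay condition is inherited from the integrability of the raised-cosine spectrum $G(f)$ together with Lemma~\ref{lem:Gposdef}. The second step is to exploit the fact that block circulant matrices form a multiplicative algebra: $\bm{C}_A\bm{C}_B$ is itself block circulant with generating matrix $\mathcal{G}_A(f_n)\mathcal{G}_B(f_n)$, since the block DFT simultaneously block-diagonalizes every block circulant of a given size.

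The third step is to transfer this back to $\bm{A}\bm{B}$. Using the triangle inequality in the weak norm,
\begin{equation}
\|\bm{A}\bm{B}-\bm{C}_A\bm{C}_B\|_{HS}\leq \|\bm{A}-\bm{C}_A\|_{HS}\|\bm{B}\|+\|\bm{C}_A\|\|\bm{B}-\bm{C}_B\|_{HS},
\end{equation}
and normalizing by $\sqrt{NL}$, the right-hand side vanishes provided the operator norms $\|\bm{B}\|$ and $\|\bm{C}_A\|$ stay uniformly bounded in $N$; this is guaranteed because these norms are controlled by $\|\mathcal{G}_A\|_\infty$ and $\|\mathcal{G}_B\|_\infty$, which are finite in our setting. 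Hence $\bm{A}\bm{B}$ is asymptotically equivalent to a block circulant matrix whose symbol is $\mathcal{G}_A\mathcal{G}_B$, and since every block circulant is in particular block Toeplitz, the same generating matrix $\mathcal{G}_A(f_n)\mathcal{G}_B(f_n)$ identifies an asymptotic block Toeplitz limit for $\bm{A}\bm{B}$ itself.

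The main obstacle I anticipate is the non-commutativity of the $L\times L$ blocks, which prevents a direct convolution argument of the scalar type and forces the use of operator-norm bounds on the generating matrices to control the weak-norm error; once this uniform boundedness is secured (from the spectral decay of $G(f)$ and the finite channel energy $\sum_j\|\tilde{\bm{H}}^j\|^2<\infty$), the remainder of the argument is a routine block-algebraic adaptation of Gray's framework, exactly as carried out in \cite{blocktoep}.
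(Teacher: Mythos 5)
The paper does not prove this lemma --- it is stated purely by citation to \cite[Theorem 2]{blocktoep}, so there is no in-paper proof to compare against. Your outline is the standard Gray-style argument lifted to the block setting: approximate each block Toeplitz factor by a block circulant, exploit the fact that block circulants form a multiplicative algebra (so $\bm{C}_A\bm{C}_B$ is block circulant with symbol $\mathcal{G}_A\mathcal{G}_B$, via simultaneous block diagonalization by the block DFT), and transfer the asymptotic equivalence to $\bm{A}\bm{B}$ through the weak-norm triangle bound with uniform operator-norm control furnished by $\|\mathcal{G}_A\|_\infty$ and $\|\mathcal{G}_B\|_\infty$. This is precisely the route the cited reference follows and your sketch is sound; the one detail worth tightening is that the absolute summability you invoke for $\bm{I}_K\otimes\bm{G}^{-1}$ does not follow from the positive-definiteness asserted in Lemma~\ref{lem:Gposdef} alone, but requires the folded spectrum $G_d(f_n)$ to be bounded away from zero so that Wiener's lemma puts $\bm{G}^{-1}$ in the Wiener class, which is the content of the standing assumption $\delta(1+\beta)\geq 1$ with $\beta>0$.
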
 

After manipulating {\eqref{eqn:73}} we can obtain 
\begin{equation}
    C^{FS}=\underset{N\rightarrow\infty}{\lim}\underset{\bm{\Sigma_A,\mathcal{P}}}{\max}\frac{1}{N}\log_2\det\bigg[\bm{I}_{LN}+\sigma^{-2}_0\bm{\Sigma_A}\bm{G_H}^\dagger(\bm{I}_K\otimes\bm{G}^{-1})\bm{G_H}\bigg].  \label{eqn:cfsmanipulated}
\end{equation}
In this equation, the matrix $\bm{G_H}^\dagger(\bm{I}_K\otimes\bm{G}^{-1})\bm{G_H}$ can be written as {\eqref{eqn:89}-\eqref{eqn:92},}
\begin{figure*}[ht]
    \centering
\begin{align}
    &\bm{G_H}^\dagger(\bm{I}_K\otimes\bm{G}^{-1})\bm{G_H} =\left(\sum_{j=0}^{J-1}\left(\tilde{\bm{H}}^{j\dagger}\otimes\bm{G}^{j\dagger}\bm{G}^{-1}\right)\right)\left(\sum_{i=0}^{J-1}\left(\tilde{\bm{H}}^{i}\otimes\bm{G}^{i}\right)\right)=\sum_{j=0}^{J-1}\sum_{i=0}^{J-1}\left(\tilde{\bm{H}}^{j\dagger}\tilde{\bm{H}}^{i}\otimes\bm{G}^{j\dagger}\bm{G}^{-1}\bm{G}^{i}\right) \label{eqn:89}\\
    &=\left[\begin{matrix}
    \sum_{i,j}(\sum_{k=1}^{K}h^{j*}_{k1}h^{i}_{k1})\bm{G}^{j\dagger}\bm{G}^{-1}\bm{G}^{i} & \sum_{i,j}(\sum_{k=1}^{K}h^{j*}_{k1}h^{i}_{k2})\bm{G}^{j\dagger}\bm{G}^{-1}\bm{G}^{i} & \dots & \sum_{i,j}(\sum_{k=1}^{K}h^{j*}_{k1}h^{i}_{kL})\bm{G}^{j\dagger}\bm{G}^{-1}\bm{G}^{i}   \\
    \sum_{i,j}(\sum_{k=1}^{K}h^{j*}_{k2}h^{i}_{k1})\bm{G}^{j\dagger}\bm{G}^{-1}\bm{G}^{i} & \sum_{i,j}(\sum_{k=1}^{K}h^{j*}_{k2}h^{i}_{k2})\bm{G}^{j\dagger}\bm{G}^{-1}\bm{G}^{i} & \dots & \sum_{i,j}(\sum_{k=1}^{K}h^{j*}_{k2}h^{i}_{kL})\bm{G}^{j\dagger}\bm{G}^{-1}\bm{G}^{i} \\
    \vdots & \vdots & \ddots & \vdots \\
    \sum_{i,j}(\sum_{k=1}^{K}h^{j*}_{kL}h^{i}_{k1})\bm{G}^{j\dagger}\bm{G}^{-1}\bm{G}^{i} & \sum_{i,j}(\sum_{k=1}^{K}h^{j*}_{kL}h^{i}_{k2})\bm{G}^{j\dagger}\bm{G}^{-1}\bm{G}^{i} & \dots & \sum_{i,j}(\sum_{k=1}^{K}h^{j*}_{kL}h^{i}_{kL})\bm{G}^{j\dagger}\bm{G}^{-1}\bm{G}^{i} 
    \end{matrix} \right] \label{eqn:92} 
\end{align}
\end{figure*}
where $\bm{G}^j$ was defined in \eqref{eqn:defGj} and $i,j=0,\dots,J-1$. Note that, $\bm{G_H}^\dagger(\bm{I}_K\otimes\bm{G}^{-1})\bm{G_H}$ is an asymptotically block Toeplitz matrix. In order to find its generating matrix, we need to find the generating function of  $\bm{G}^{j\dagger}\bm{G}^{-1}\bm{G}^{i}$. According to {\cite{gray}}, this can be found by multiplying the generating functions of each of the three matrices. To find the generating function of $\bm{G}$, let's define the discrete time Fourier transform of $g[n]$ as
	\begin{eqnarray}
		G_d(f_n)  \triangleq   \sum_{p=-\infty}^{+\infty}g[p]e^{-j2\pi f_n p} 
		 = \frac{1}{\delta T}\sum_{m=-\infty}^{+\infty}G\left(\frac{f_n-m}{\delta T}\right), \label{eqn:GdGrelation}
	\end{eqnarray}
	 where $G(f)$ is the continuous time Fourier transform of $g(t)$. 
	 It is easy to see that $G_d(-f_n)$ is the generating function of matrix $\bm{G}$ according to the definition of generating function, but since $G_d(f_n)$ is an even function $G_d(-f_n)= G_d(f_n)$. The folded spectrum $G_d(f_n)$ is periodic in $f_n$ with period 1. Thus, 
\begin{align}
    &\mathcal{G}(\bm{G}^{j\dagger}\bm{G}^{-1}\bm{G}^{i})=\mathcal{G}(\bm{G}^{j\dagger})\mathcal{G}(\bm{G}^{-1})\mathcal{G}(\bm{G}^{i})\\
    &=\left(\sum_ng\left(-(n+j)\delta T\right)e^{j2\pi f_nn}\right) \notag\\ & \qquad\qquad\times\left(\sum_mg\left((m-i)\delta T\right)e^{j2\pi f_nm}\right)G_d^{-1}(f_n)\\
    &=G_d(f_n)e^{-j2\pi f_nj}G_d(f_n)e^{j2\pi f_ni}G_d^{-1}(f_n) \\
    &=G_d(f_n)e^{-j2\pi f_n(j-i)},
\end{align}
where we let $\mathcal{G}$ represent both finding the generating function and finding the generating matrix operation. 
	 With the derivation above we can easily obtain the generating matrix of $\bm{G_H}^\dagger(\bm{I}_K\otimes\bm{G}^{-1})\bm{G_H}$, whose $(n,m)$th entry is
\begin{align}
    &\big(\mathcal{G}(\bm{G_H}^\dagger(\bm{I}_K\otimes\bm{G}^{-1})\bm{G_H})\big)_{n,m}=\sum_{i,j}\left(\sum_{k=1}^{K}h^{j*}_{kn}h^{i}_{km}\right)\mathcal{G}(\bm{G}^{j\dagger}\bm{G}^{-1}\bm{G}^{i}) \\
    &=\sum_{k=1}^{K}\left[\left(\sum_{i=0}^{J-1}h^{i}_{km}e^{j2\pi f_ni}\right)\left(\sum_{j=0}^{J-1}h^{j*}_{km}e^{-j2\pi f_nj}\right)G_d(f_n)\right]  \\
    &=\sum_{k=1}^{K}H^*_{kn}(-f_n)H_{km}(-f_n)G_d(f_n).
\end{align}
Here $H_{km}(-f_n)$ is defined as $H_{km}(-f_n)=\sum_{i=0}^{J-1}h^{i}_{km}e^{j2\pi f_ni}$, which is the discrete time Fourier transform of the coefficients of the $(n,m)$th antenna link. We call this the link spectrum. Now we define the matrix 
\begin{equation}
    \tilde{\bm{H}}(-f_n) = \left[\begin{matrix} 
    H_{11}(-f_n)& H_{12}(-f_n)& \dots &H_{1L}(-f_n)\\ H_{21}(-f_n)& H_{22}(-f_n)& \dots &H_{2L}(-f_n) \\
    \vdots & \vdots & \ddots & \vdots \\
    H_{K1}(-f_n)& H_{K2}(-f_n)& \dots &H_{KL}(-f_n) 
    \end{matrix} \right]
\end{equation}
as the channel spectrum matrix, which is composed of the link spectrums. Then it is straightforward to write 
\begin{align}
    \mathcal{G}(\bm{G_H}^\dagger(\bm{I}_K\otimes\bm{G}^{-1})\bm{G_H})&=G_d(f_n)\tilde{\bm{H}}(-f_n)^\dagger\tilde{\bm{H}}(-f_n)\notag\\
    &=G_d(f_n)\tilde{\bm{Z}}(-f_n),
\end{align}
where $\tilde{\bm{Z}}(-f_n)\triangleq\tilde{\bm{H}}(-f_n)^\dagger\tilde{\bm{H}}(-f_n)$.

We also need to find the generating matrix for $\bm{\Sigma_A}$ to obtain $\mathcal{G}(\bm{\Sigma_A}\bm{G_H}^\dagger(\bm{I}_K\otimes\bm{G}^{-1})\bm{G_H})$, which is in the resulting expression after applying Szeg\"o's theorem on \eqref{eqn:cfsmanipulated}. As $\bm{\Sigma_A}$ is also a  block Toeplitz matrix, we define
\begin{align}
    \tilde{\bm{S}}_a(f_n) &= \mathcal{G}(\bm{\Sigma_A}) \notag\\
    &= \left[\begin{matrix} 
    S_{a,11}(f_n)& S_{a,12}(f_n)& \dots &S_{a,1L}(f_n)\\ S_{a,21}(f_n)& S_{a,22}(f_n)& \dots &S_{a,2L}(f_n)\\
    \vdots & \vdots & \ddots & \vdots \\
    S_{a,L1}(f_n)& S_{a,L2}(f_n)& \dots &S_{a,LL}(f_n) 
    \end{matrix} \right],
\end{align}
where $S_{a,ij}(f_n)=\mathcal{G}(\bm{K}_{\bm{a},ij}), i,j=1,\dots,L$. We call $\tilde{\bm{S}}_a(f_n)$ as the input generating matrix, the diagonal elements of which are the PSDs of input data sequences $a_i[n], i=1,\dots,L$, and the off-diagonal elements are the cross PSDs of input data sequences $a_i[n]$ and $a_j[n], i,j=1,\dots,L$, $i \neq j$. By applying \eqref{eqn:genszego} of Lemma \ref{lem:blktoep} on 
\eqref{eqn:cfsmanipulated} and on the power constraint {\eqref{eqn:fsfcdetpowconst}}, the optimization problem becomes
\begin{align}
    &C^{FS}=\underset{\tilde{\bm{S}}_a(f_n)}{\max}\int_{-\frac{1}{2}}^{\frac{1}{2}}\log_2\det\bigg[\bm{I}_{L}+\frac{G_d(f_n)}{\sigma^{2}_0}\tilde{\bm{S}}_a(f_n)\tilde{\bm{Z}}(f_n)\bigg]df_n \label{eqn:fsfcobj}\\
    & s.t. \quad\quad \frac{1}{\delta T}\int_{-\frac{1}{2}}^{\frac{1}{2}}\text{tr}\big[G_d(f_n)\tilde{\bm{S}}_a(f_n)\big]df_n\leq P.\label{eqn:fsfccons}
\end{align}
Before we solve the above optimization problem, let's define  $\tilde{\bm{S}}(f_n)  \triangleq G_d(f_n)\tilde{\bm{S}}_a(f_n)$.  
We call $\tilde{\bm{S}}(f_n)$ as the power generating matrix, since the integration over the sum of its diagonal elements is the transmit power.
Note that, \eqref{eqn:fsfcobj} contains both the influence from the FS channel $\tilde{\bm{Z}}(f_n)$ and the spectrum of the pulse shaping filter $G_d(f_n)$. This is because the ISI caused by FTN and the channel-induced ISI jointly affect the system. It can be considered that the channel has additional frequency selectivity with spectrum $G_d(f_n)$, which is due to FTN signaling.

We would like to emphasize that $\tilde{\bm{S}}(f_n)$ and $\tilde{\bm{Z}}(f_n)$ are always Hermitian matrices regardless of $f_n$. Therefore, we can use eigenvalue decomposition to diagonalize them and write 
\begin{align}
    \tilde{\bm{S}}(f_n)&=G_d(f_n)\tilde{\bm{S}}_a(f_n)=\tilde{\bm{U}}(f_n)\tilde{\bm{\Phi}}(f_n)\tilde{\bm{U}}(f_n)^\dagger \label{eqn:decompS}\\
    \tilde{\bm{Z}}(f_n)&=\tilde{\bm{V}}(f_n)\tilde{\bm{T}}(f_n)\tilde{\bm{V}}(f_n)^\dagger,\label{eqn:decompW}
\end{align}
where the diagonal matrices $\tilde{\bm{\Phi}}(f_n)$ and $\tilde{\bm{T}}(f_n)$ have the structure $\tilde{\bm{\Phi}}(f_n)=\text{diag}\{\phi_1(f_n), \phi_2(f_n),\dots, \phi_L(f_n)\}$ and $\tilde{\bm{T}}(f_n)=\text{diag}\{\tau_1(f_n), \tau_2(f_n),\dots, \tau_L(f_n)\}$. 
The channel matrix $\tilde{\bm{Z}}(f_n)$ is decomposed into eigenchannels, $\tau_i(f_n)$'s are the eigenmodes at frequency $f_n$, and $\phi_i(f_n)$'s are the power allocated to the eigenchannels, which we call the eigenspectrum. 

{
In the next step, we can upper bound \eqref{eqn:fsfcobj} using Hadamard inequality as
\begin{align}
    &\int_{-\frac{1}{2}}^{\frac{1}{2}}\log_2\det\left[\bm{I}_{L}+\sigma^{-2}_0\tilde{\bm{S}}(f_n)\tilde{\bm{Z}}(f_n)\right]df_n\leq \\
    &\qquad\qquad\qquad\int_{-\frac{1}{2}}^{\frac{1}{2}}\log_2\det\left[\bm{I}_{L}+\sigma^{-2}_0\tilde{\bm{\Phi}}(f_n)\tilde{\bm{T}}(f_n)\right]df_n,
\end{align}
and the upper bound is achieved when $\tilde{\bm{U}}(f_n)=\tilde{\bm{V}}(f_n),  \forall f_n$. Then the problem in \eqref{eqn:fsfcobj}-\eqref{eqn:fsfccons} becomes
\begin{align}
    &C^{FS}=\underset{\phi_1(f_n),\dots,\phi_L(f_n)}{\max}\int_{-\frac{1}{2}}^{\frac{1}{2}}\sum_{i=1}^{L}\log_2\left(1+\sigma^{-2}_0\phi_i(f_n)\tau_i(f_n)\right)df_n \label{eqn:fsfcobjdiag}\\
    & s.t. \quad\quad \frac{1}{\delta T}\int_{-\frac{1}{2}}^{\frac{1}{2}}\sum_{i=1}^{L}\phi_i(f_n)df_n\leq P.\label{eqn:fsfcconsdiag}
\end{align}
By applying the Lagrange multiplier method, we get the optimal solution for $\phi_i(f_n)$ as
\begin{equation}
    \phi_i(f_n)^*=\left(\frac{1}{\mu}-\frac{1}{\tau_i(f_n)}\right)^+, i=1,\dots,L, \quad f_n\in\left[-\frac{1}{2},\frac{1}{2}\right], \label{eqn:solufsfc}
\end{equation}
where superscript $*$ means optimal
. The Lagrange multiplier $\mu$ can be found by solving 
\begin{equation}
    \frac{1}{\delta T}\int_{-\frac{1}{2}}^{\frac{1}{2}}\sum_{i=1}^{L}\left(\frac{1}{\mu}-\frac{1}{\tau_i(f_n)}\right)^+df_n = P.
\end{equation} 
The optimal eigenspectrum for each eigenchannel is in the form of joint waterfilling in both spatial and spectral domains.  There are $L$ eigenchannels in total, and waterfilling is done according to the inverted eigenmodes $1/\tau_i(f_n)$. All eigenchannels have the same water level  $\frac{1}{\mu}$ such that the integrated power over all frequencies and all channels satisfies \eqref{eqn:fsfcconsdiag}.  This is illustrated in Fig. \ref{fig:fsfcwaterfill}. 
\begin{figure}
    \centering
    \includegraphics[scale=0.07]{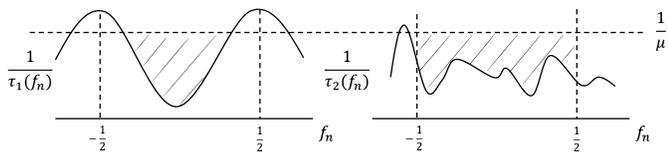}
    \caption{Waterfilling over a FS $2\times2$ MIMO channel with FTN. According to the inverted eigenmodes, the water level is the same for both eigenchannels.}
    \label{fig:fsfcwaterfill}
\end{figure}

Because we are applying FTN signaling, the resulting optimal generating matrix $\tilde{\bm{S}}^*_a(f_n)$ can be obtained 
according to \eqref{eqn:decompS}  as $\tilde{\bm{S}}^*_a(f_n)=\frac{\tilde{\bm{S}}^*(f_n)}{G_d(f_n)}. $ 
This is equivalent to dividing $\phi_i(f_n)$'s of $\tilde{\bm{\Phi}}(f_n)$ in \eqref{eqn:decompS} by $G_d(f_n)$, and we call this shaped power allocation $\tilde{\phi}_i(f_n)=\frac{\phi_i(f_n)}{G_d(f_n)}$ 
as the input eigenspectrum. 
This also shows the direct influence of FTN on the power allocation. When we design the input distribution, the power spectrum for each eigenchannel is shaped by $\frac{1}{G_d(f_n)}$.
\begin{remark}{\label{rem:6}} The influence of FTN on the power allocation is separate from the influence of the channel. The only difference FTN makes is further shaping the power spectrum by $\frac{1}{G_d(f_n)}$. Moreover, for orthogonal signaling with pulses following the Nyquist ISI criterion, the optimal input generating matrix is just $\tilde{\bm{S}}^*(f_n)$ since $G_d(f_n)=1$. 
\end{remark}

\subsection{Spectrum Analysis of Frequency Flat MIMO Channels with FTN}

In this section, we obtain the power spectrum of the frequency flat MIMO FTN channel as a special case of the above discussion. In frequency flat fading, the channel response has only one tap, $h_{kl}^j=h_{kl}\delta[j]$, where $h_{kl}$ is the channel coefficient for link $(k,l)$ and $\delta[j]$ is the discrete Dirac delta function. Then \eqref{eqn:fsfcobj} and \eqref{eqn:fsfccons} become
\begin{align}
    &C=\underset{\tilde{\bm{S}}(f_n)}{\max}\int_{-\frac{1}{2}}^{\frac{1}{2}}\log_2\det\bigg[\bm{I}_{L}+\sigma^{-2}_0\tilde{\bm{S}}(f_n)\tilde{\bm{Z}}(f_n)\bigg]df_n \label{eqn:mimospecobj}\\
    & s.t. \quad\quad \frac{1}{\delta T}\int_{-\frac{1}{2}}^{\frac{1}{2}}\text{tr}\big[\tilde{\bm{S}}(f_n)\big]df_n\leq P.\label{eqn:mimospeccons}
\end{align} This optimization is solved by the same {spectrum waterfilling} strategy of the previous section, only that the eigenmodes $\tau_i(f_n)$'s   
have a constant spectrum on {{$[-\frac{1}{2},\frac{1}{2}]$}} with amplitude $\tau_i$'s. The resulting optimal $\phi^*_i(f_n)$'s also have constant  spectrum  according to \eqref{eqn:solufsfc}, and the amplitudes are identical to the solution in (47). It is easy to see that the optimal power generating matrix in \eqref{eqn:decompS} becomes 
\begin{equation}
    \tilde{\bm{S}}(f_n)=\bm{V}(f_n)\bm{\Lambda}_{\bar{\alpha}}\bm{V}^\dagger(f_n)=\bm{V_Z\Lambda}_{\bar{\alpha}}\bm{V_Z}^\dagger.
\end{equation}
However, the optimal input generating matrix is not constant, it is equal to 
\begin{equation}
    \tilde{\bm{S}}_a(f_n)=\frac{\bm{V_Z\Lambda}_{\bar{\alpha}}\bm{V_Z}^\dagger}{G_d(f_n)}. \label{eqn:116}
\end{equation} We note that this matrix is exactly the same as the generating matrix of the optimal input covariance matrix in {\eqref{eqn:lll}}. The optimal input eigenspectrum in this case will be $\frac{\bar{\alpha}_i}{G_d(f_n)}$.
\begin{remark}
Due to \eqref{eqn:116}, we can see that in flat fading MIMO channels, the influence of FTN on the spectrum can be separated from the spatial effects of MIMO. This is also a natural consequence of Remark~\ref{rem:6} for FS MIMO FTN channels.
\end{remark}
\begin{remark}
We can consider frequency flat SISO channel with FTN as a special case of the above result and easily obtain the result in \cite{ganji}. 
\end{remark}
}
}
	\section{Numerical Results}\label{sec8}
	
	In this section, we display FTN capacity for different $(\delta, \beta)$ pairs for both frequency flat and FS MIMO channels. We assume $g(t)$ is a raised cosine pulse and $T=0.01$. 
	{The SNR is computed as $\text{SNR}=P/\sigma_0^2$. For frequency flat fading cases {in Figs. \ref{fig8} - \ref{fig:mimoftn}}, channel coefficients follow a circularly symmetric complex Gaussian distribution with zero mean and unit variance. 
	All the simulation results are averaged over 1000 random channel realizations, unless otherwise stated.}

			\begin{figure}[t]
		\centering
		\includegraphics[scale=0.6]{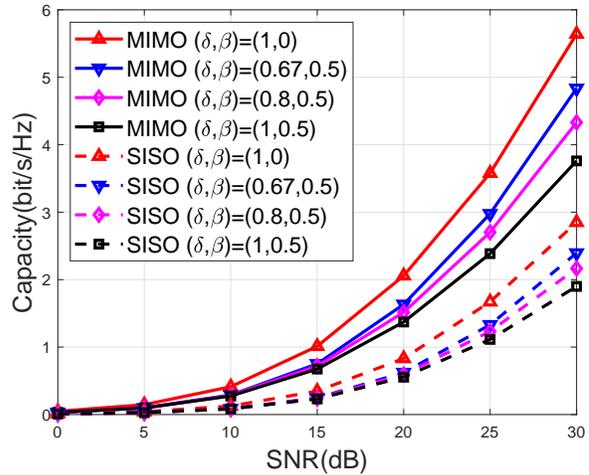}
		\caption{Frequency flat, $2\times 2$ MIMO FTN and SISO FTN capacity results for different $\delta$ values for the roll-off factor $\beta = 0.5$. The pairs $(\delta, \beta)=(1,0)$ and $(1,0.5)$ are respectively equivalent to Nyquist signaling with an ideal sinc pulse, and to Nyquist signaling with $\beta=0.5$.}
		\label{fig8}
	\end{figure}
	
			\begin{figure}[t]
		\centering
		\includegraphics[scale=0.6]{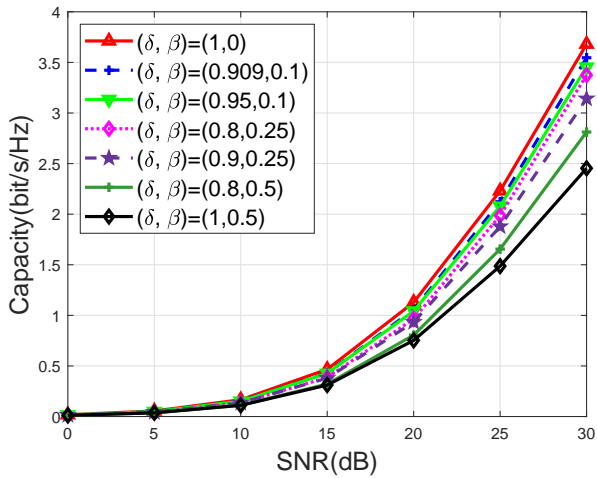}
		\caption{Frequency flat, $2\times 2$ MIMO FTN capacity for different $(\delta,\beta)$ values, displaying the effect of $\beta$ approaching 0. The pairs $(\delta, \beta)=(1,0)$ and $(1,0.5)$ are respectively equivalent to Nyquist signaling with an ideal sinc pulse, and to Nyquist signaling with $\beta=0.5$. }
		\label{taubeta}
	\end{figure}
	
	In Fig.~\ref{fig8} we compare the capacity of a frequency flat, $2\times 2$ MIMO FTN channel with the capacity of a frequency flat SISO FTN channel, for different acceleration factors $\delta$. The roll-off factor $\beta$ is set to $0.5$. We observe that for a given $\beta$, as we decrease $\delta$ capacity becomes larger for both MIMO and SISO. This spectral efficiency improvement clearly demonstrates the superiority of FTN for practical pulse shapes. FTN signaling utilizes the bandwidth more efficiently and outperforms orthogonal signaling. Despite the fact that FTN introduces ISI, increasing signaling rate or decreasing $\delta$ while $\delta(1+\beta)\geq 1$ satisfied, increases capacity. Note that, $(\delta,\beta)=(1,0)$; i.e. orthogonal signaling with ideal sinc pulses, is a performance upper bound for both SISO and MIMO FTN.

		\begin{figure}[t]
		\centering
		\includegraphics[scale=0.6]{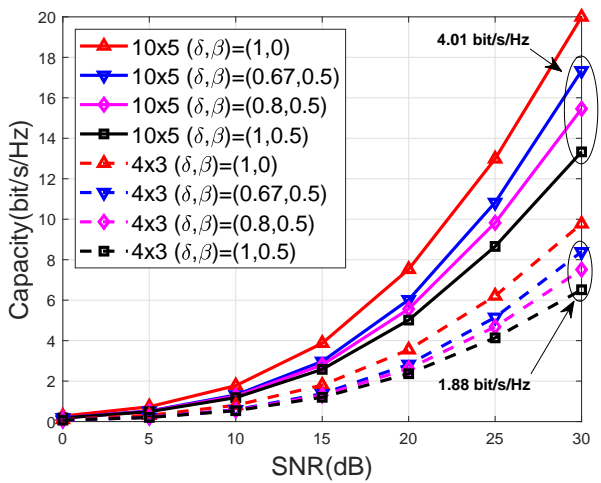}
		\caption{Frequency flat, $4\times 3$ and $10 \times 5$ MIMO FTN capacity results for different $\delta$ values for the roll-off factor $\beta = 0.5$. The pairs $(\delta, \beta)=(1,0)$ and $(1,0.5)$ are respectively equivalent to Nyquist signaling with an ideal sinc pulse, and to Nyquist signaling with $\beta=0.5$.}
		\label{fig9}
	\end{figure}
{In Fig.~\ref{taubeta}, we compare different $(\delta,\beta)$ pairs for the $2 \times 2$ MIMO FTN channel with frequency flat fading. We observe that the capacity increases as $\beta$ decreases, and for each $\beta$ the capacity increases as $\delta$ decreases while $\delta \geq 1/(1+\beta)$. Similar to the SISO case, we observe that for each $\beta$, the best $\delta = 1/(1+\beta)$, and overall the best $(\delta,\beta)$ pair is $(1,0)$. Although orthogonal signaling with sinc pulses achieve the best performance, practical systems operate at non-zero $\beta$ values and the acceleration factor should be chosen as $1/(1+\beta)$ to achieve the highest spectral efficiency possible for in MIMO FTN as well.}
	
We compare $4 \times 3$ and $10 \times 5$ MIMO FTN for frequency flat fading channels in Fig.~\ref{fig9}. We observe that there is a larger gain of FTN against Nyquist transmission when we use larger sized antenna arrays. For example, in Fig~\ref{fig9}, for $\beta=0.5$, changing $\delta$ from 1 to 0.67 brings in an extra $4.01$ bits/s/Hz in $10\times5$ MIMO FTN, while it offers only $1.88$ bits/s/Hz in $4\times3$ MIMO FTN signaling. Thus, the potential of FTN signaling increases as we increase the number of antennas.


	\begin{figure*}[htbp]
	    \centering
	    \includegraphics[scale=0.6]{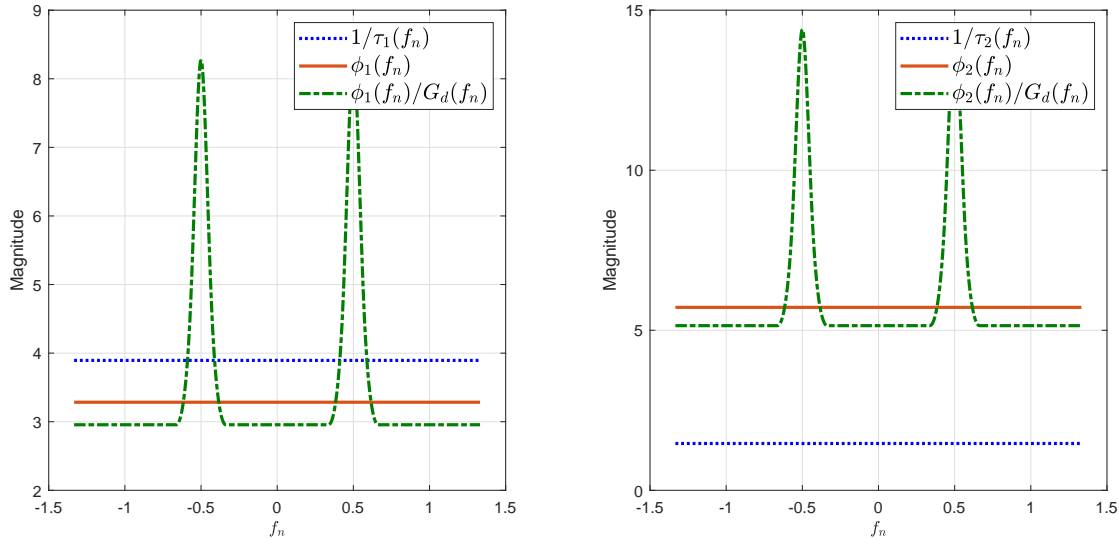}
	    \caption{Frequency domain power allocation for a $2\times2$ MIMO FTN in frequency flat channel.}
	    \label{fig:mimoftn}
	\end{figure*}
	{We plot the spectrum domain power allocation for a frequency flat, $2\times 2$ MIMO channel with FTN {for one particular channel realization} in Fig.~\ref{fig:mimoftn}. We assume $\delta=0.9$, $\beta=0.25$ and SNR is $10$dB. In the figure, the two eigenmodes $\tau_i$, $i= 1,2$, power allocation $\bar{\alpha}_i$ (or equivalently $\phi_i(f_n)$) and the optimal input eigenspectrum $\frac{\bar{\alpha}_i}{G_d(f_n)}$ (equivalently $\frac{\phi_i(f_n)}{G_d(f_n)}$) are shown. As we can see, the power allocation is composed of two independent parts, waterfilling in spatial domain and spectrum shaping in frequency domain as discussed in Remark~\ref{rem:6} in Section \ref{subsec:freqdomA}. Amplitude of $\phi_i(f_n)$ is bigger when eigenmode $\tau_i$ is stronger, which is a natural consequence of waterfilling.}
	
			\begin{figure}[t]
		\centering
		\includegraphics[scale=0.6]{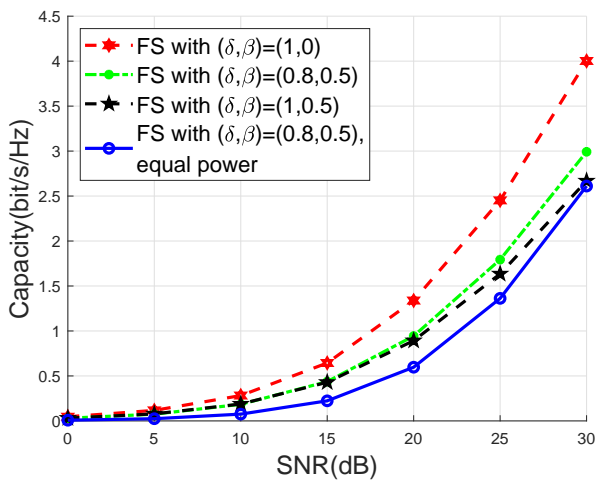}
		\caption{$2\times 2$ MIMO FTN spectral efficiency results under frequency selective fading for different $\delta$ values for the roll-off factor $\beta = 0.5$. The pairs $(\delta, \beta)=(1,0)$ and $(1,0.5)$ are respectively equivalent to Nyquist signaling with an ideal sinc pulse, and to Nyquist signaling with $\beta=0.5$. The blue curve with circle marks display the performance for equal power allocation. }
		\label{fig:fsfc}
	\end{figure}

\begin{figure}
    \centering
    \includegraphics[scale=0.6]{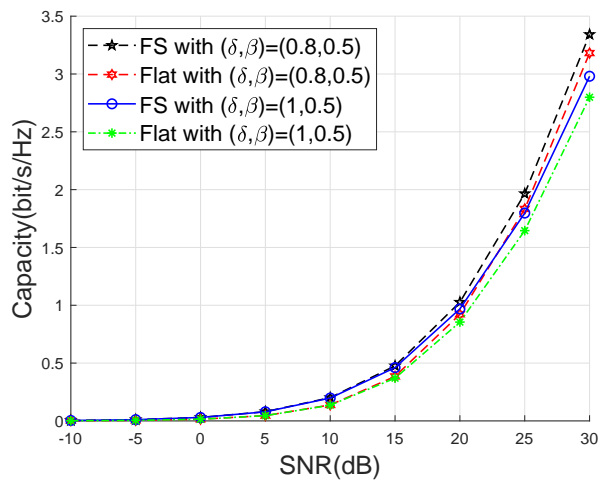}
    \caption{$2\times 2$ MIMO FTN capacity results under frequency flat and frequency selective fading channels for different $\delta$ values for the roll-off factor $\beta = 0.5$.}
    \label{fig:FTNnonFTNFsFlat}
\end{figure}
We plot the MIMO FTN spectral efficiency (SE) under FS fading for different $(\delta, \beta)$ pairs in Fig.~\ref{fig:fsfc}. We also compare the performance of equal power allocation with that of optimal power allocation. For equal power allocation, we assume $\bm{\Sigma_A}=c\bm{I}_{LN}$, where $c$ is a scaling factor that satisfies the power constraint \eqref{eqn:fsfcpowconst} with equality. As in  \eqref{eqn:fsfctimemodel}, we assume the gap between taps is $\delta T$, and all channel coefficients $h_{kl}^j$ are independent and identically distributed according to the complex Gaussian distribution $\mathcal{CN}(0,1/(LJ))$. We set the number of taps to $J=20$ and {$N=1000$}.  
As can be seen from the figure, FTN with optimum power allocation performs better than Nyquist signaling for the same $\beta$. The performance of Nyquist signaling with $(\delta,\beta)=(1,0)$ is again an upper bound on system performance. Moreover, when we compare optimal and equal power allocation for FTN signaling with $(\delta, \beta)=(0.8, 0.5)$, we observe that optimal power allocation can achieve $0.381$bit/s/Hz more rate with respect to equal power allocation.

{We plot Fig.~\ref{fig:FTNnonFTNFsFlat} with the same assumptions as in Fig.~\ref{fig:fsfc} to further study the effect of frequency selectivity. 
{All curves are for optimized power allocation.} We can see that FTN improves the performance for both flat fading and FS channels, and the benefit of FTN is independent from the channel, whether it is FS or flat fading. 
{ Note that the superiority of FS channel capacity against flat fading is due to additional diversity gain \cite{goldsmith}. } Moreover, in support of {Remark~\ref{rem:6}}, the figure shows that FTN does not change the fact that frequency selectivity {increases} performance.

\begin{figure*}[t]
    \centering
    \includegraphics[scale=0.66]{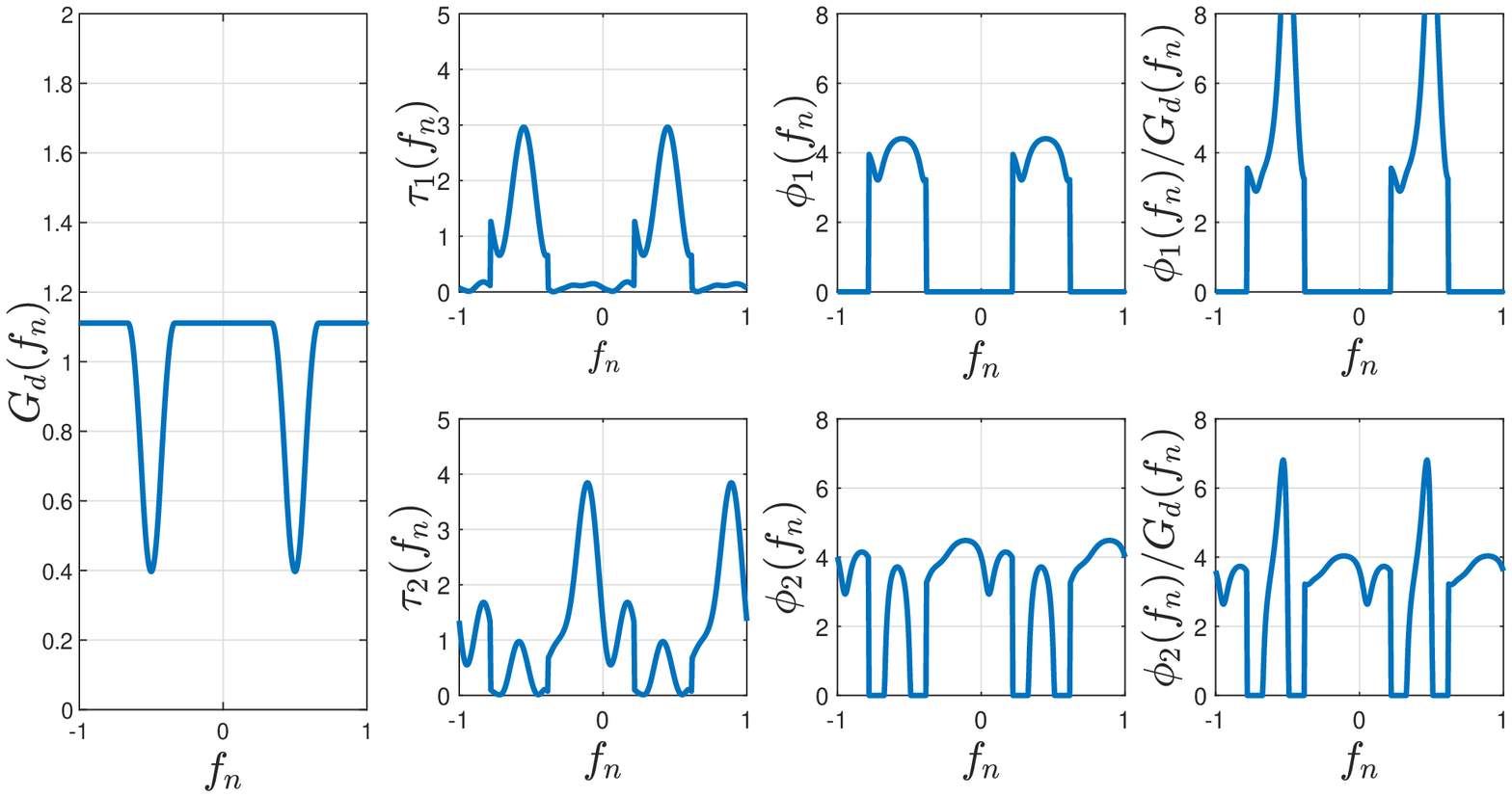}
    \caption{Spectrum domain analysis of  FTN signaling in $2\times2$ MIMO FS channel.}
    \label{fig:mimofsfc}
\end{figure*}

{In Fig. \ref{fig:mimofsfc}, we plot the spectrum domain power allocation of FS $2\times 2$ MIMO channel with FTN to show the joint influence of both the FS channel and FTN. We assume all channel coefficients follow i.i.d. Gaussian distribution with zero mean and variance $\frac{1}{LJ}$  
where $J=5$, SNR is $5$dB, $\delta=0.9$, and $\beta=0.25$.  The figures are plotted for one random channel realization only. 
All the spectrum in the subfigures are periodic with period 1.
As we can see, the water-filling result $\phi_1(f_n)$ and $\phi_2(f_n)$ of \eqref{eqn:solufsfc} have bigger amplitude where the eigenmodes $\tau_1(f_n)$ and $\tau_2(f_n)$ from \eqref{eqn:decompW} are strong. This is reasonable since it is not desirable to invest in weak channel conditions. The optimal input eigenspectrum $\frac{\phi_i(f_n)}{G_d(f_n)}$, $i= 1,2$, is amplified at where $G_d(f_n)$ has a dent, so that it will cancel the influence brought by FTN.}
}

\section{Conclusion} \label{sec9}

	\IEEEpeerreviewmaketitle

	
	%

	In this paper we investigate the capacity of FTN signaling for both frequency flat and frequency selective MIMO channels. The transmit power constraint takes the effect of inter-symbol interference into account, and is expressed in a matrix multiplication form. This expression simplifies the capacity optimization problem, and leads to the optimal joint precoding and waterfilling solution for MIMO FTN in flat fading channels. The optimal transmission scheme removes inter-symbol interference at the transmitter side and allocates power optimally over transmit antennas via waterfilling. 
	{We also study the capacity and the optimal transmission scheme for frequency selective fading channels. The capacity of MIMO FTN signaling in frequency selective channels can be achieved by water-filling in both spatial and spectrum domains together with spectrum shaping. Moreover, we prove that the effects of FTN and frequency selectivity on the optimized power spectrum are separable. The optimal input eigenspectrum functions are determined jointly by the channel and the influence of FTN; however, the {MIMO channel only affects the eigenspectrum functions, and the pulse shaping filter of FTN provides spectrum shaping.}
	Overall, we conclude that MIMO FTN significantly improves spectral efficiency with respect to orthogonal signaling for practical root raised cosine pulses with roll-off factor $\beta>0$ for both frequency flat and frequency selective channels. We believe that a complete analysis which accounts for $\delta(1+\beta)<1$, is a very complex problem, and it will be considered as future work. }
	

	\appendices	

	\section{}\label{appE}
	
	In this appendix, we prove that the matrix $\bm{\bar{\Sigma}_A}$ defined in \eqref{eqn:solumimo} as $(\bm{I}_L\otimes\bm{G})^{-1}\bar{\bm{K}}$ is a valid covariance matrix or equivalently it is both positive semidefinite and Hermitian. 
	
	We know that matrix $\bar{\bm{K}}$ is a Hermitian matrix. Moreover, $\bar{\bm{K}}$ is positive semidefinite, and its eigenvalues $\bar{\alpha}_i\geq 0, \forall i \in\{1,\cdots,NL\}$.
	
The matrix $(\bm{I}_L\otimes\bm{G})$ is also Hermitian. Moreover, due to Lemma \ref{lem:Gposemidef}, $\bm{G}$ is a positive semidefinite matrix. According to the property of the Kronecker product we have $(\bm{I}\otimes\bm{G})^{-1}=\bm{I}\otimes\bm{G}^{-1}$, which can be decomposed into $(\bm{Q}\otimes\bm{V_G})(\bm{I}_L\otimes\bm{\Lambda_G}^{-1})(\bm{Q}\otimes\bm{V_G})^\dagger$, where $\bm{Q}\in\mathbb{C}^{L\times L}$ is a unitary matrix composed of the eigenvectors of $\bm{I}_L$ \cite{matrix}. Furthermore, we can diagonalize the matrix $\bm{W}=(\bm{\tilde{H}}^\dagger\bm{\tilde{H}})\otimes\bm{I}_N$ of \eqref{eqn:W} as $(\bm{V}_{\bm{Z}}\otimes\bm{U})(\bm{\Lambda_Z}\otimes\bm{I}_N)(\bm{V}_{\bm{Z}}\otimes\bm{U})^\dagger$, where $\bm{U}\in\mathbb{C}^{N\times N}$ is a unitary matrix composed of the eigenvectors of $\bm{I}_N$.	As $\bm{Q}$ and $\bm{U}$ are composed of the eigenvectors of the identity matrix, they can be chosen as any unitary matrix. Therefore, they can also be chosen as  $\bm{Q}=\bm{V_Z}$ and  $\bm{U}=\bm{V_G}$. Thus, the optimal $\bar{\bm{K}}$ becomes
	\begin{equation}
		\bar{\bm{K}}=\bm{V_W}\bm{\bar{\Lambda}_K}\bm{V_W}^\dagger=(\bm{V}_{\bm{Z}}\otimes\bm{V_G})\bm{\bar{\Lambda}_K}(\bm{V}_{\bm{Z}}\otimes\bm{V_G})^\dagger.
	\end{equation}
	We can also write the matrix $\bm{I}\otimes\bm{G}^{-1}$ as 
	\begin{equation}
		(\bm{I}\otimes\bm{G}^{-1})=(\bm{V_Z}\otimes\bm{V_G})(\bm{I}_L\otimes\bm{\Lambda_G}^{-1})(\bm{V_Z}\otimes\bm{V_G})^\dagger.
	\end{equation}
	Then, the optimal solution for $\bm{\bar{\Sigma}_A}$ in \eqref{eqn:solumimo} can be simplified as 
	\begin{align}
		\bm{\bar{\Sigma}_A}&=(\bm{I}_L\otimes\bm{G})^{-1}\bm{\bar{K}} \\&=(\bm{V_Z}\otimes\bm{V_G})(\bm{I}_L\otimes\bm{\Lambda_G}^{-1})\bm{\bar{\Lambda}_K}(\bm{V}_{\bm{Z}}\otimes\bm{V_G})^\dagger. \label{eqn:eigoptsolu}
	\end{align}
	As $(\bm{I}_L\otimes\bm{\Lambda_G}^{-1})\bm{\bar{\Lambda}_K}$ is a diagonal matrix, we first observe that $\bm{\bar{\Sigma}_A}$ is a Hermitian matrix. Moreover, its eigenvalues are the diagonal elements of $(\bm{I}_L\otimes\bm{\Lambda_G}^{-1})\bm{\bar{\Lambda}_K}$.
	As the elements of $(\bm{I}_L\otimes\bm{\Lambda_G}^{-1})$ are strictly greater than zero, all entries of $(\bm{I}_L\otimes\bm{\Lambda_G}^{-1})\bm{\bar{\Lambda}_K}$ are also greater than or equal to zero. Thus, $\bm{\bar{\Sigma}_A}$ is also a positive semidefinite matrix. We conclude that $\bm{\bar{\Sigma}_A}$ is covariance matrix and the optimal solution found by  \eqref{eqn:solumimo} is a qualified solution.

	\section*{Acknowledgment}
	
	The authors would like to thank the editor and all three anonymous reviewers for all their valuable comments that improved the quality of the paper.

	\ifCLASSOPTIONcaptionsoff
	\newpage
	\fi

	\bibliographystyle{IEEEtran}
	\bibliography{main}

\begin{thebibliography}{10}
\providecommand{\url}[1]{#1}
\csname url@samestyle\endcsname
\providecommand{\newblock}{\relax}
\providecommand{\bibinfo}[2]{#2}
\providecommand{\BIBentrySTDinterwordspacing}{\spaceskip=0pt\relax}
\providecommand{\BIBentryALTinterwordstretchfactor}{4}
\providecommand{\BIBentryALTinterwordspacing}{\spaceskip=\fontdimen2\font plus
\BIBentryALTinterwordstretchfactor\fontdimen3\font minus
  \fontdimen4\font\relax}
\providecommand{\BIBforeignlanguage}[2]{{%
\expandafter\ifx\csname l@#1\endcsname\relax
\typeout{** WARNING: IEEEtran.bst: No hyphenation pattern has been}%
\typeout{** loaded for the language `#1'. Using the pattern for}%
\typeout{** the default language instead.}%
\else
\language=\csname l@#1\endcsname
\fi
#2}}
\providecommand{\BIBdecl}{\relax}
\BIBdecl

\bibitem{5gand6g}
T.~Nakamura, ``{5G evolution and 6G},'' in \emph{IEEE Symposium on VLSI
  Technology}, 2020, pp. 1--5.

\bibitem{mazo}
J.~E. Mazo, ``{Faster-than-{N}yquist signaling},'' \emph{The Bell System
  Technical Journal}, vol.~54, no.~8, pp. 1451--1462, 1975.

\bibitem{Cover}
T.~M. Cover and J.~A. Thomas, \emph{{Elements of Information Theory}}.\hskip
  1em plus 0.5em minus 0.4em\relax New York, NY, USA: Wiley-Interscience, 2006.

\bibitem{rusek}
F.~Rusek and J.~B. Anderson, ``{Constrained capacities for faster-than-Nyquist
  signaling},'' \emph{IEEE Transactions on Information Theory}, vol.~55, no.~2,
  pp. 764--775, 2009.

\bibitem{property}
Y.~J. Daniel~Kim, ``Properties of faster-than-{N}yquist channel matrices and
  folded-spectrum, and their applications,'' in \emph{IEEE Wireless
  Communications and Networking Conference (WCNC)}, 2016, pp. 1--7.

\bibitem{bajcsy}
Y.~J.~D. Kim and J.~Bajcsy, ``{Information rates of cyclostationary
  faster-than-Nyquist signaling},'' in \emph{12th Canadian Workshop on
  Information Theory (CWIT)}, 2011, pp. 1--4.

\bibitem{rusek12}
D.~Kapetanovic and F.~Rusek, ``{The effect of signaling rate on information
  rate for single carrier linear transmission systems},'' \emph{IEEE
  Transactions on Communications}, vol.~60, no.~2, pp. 421--428, 2012.

\bibitem{spectrashpingKramer}
M.~E. Hefnawy, G.~Dietl, and G.~Kramer, ``Spectral shaping for
  faster-than-{N}yquist signaling,'' in \emph{11th International Symposium on
  Wireless Communications Systems (ISWCS)}, 2014, pp. 496--500.

\bibitem{ganji}
M.~Ganji, X.~Zou, and H.~Jafarkhani, ``{On the capacity of faster than
  {N}yquist signaling},'' \emph{IEEE Communications Letters}, vol.~24, no.~6,
  pp. 1197--1201, 2020.

\bibitem{timelocalization}
A.~Gattami, E.~Ringh, and J.~Karlsson, ``Time localization and capacity of
  faster-than-{N}yquist signaling,'' in \emph{IEEE Global Communications
  Conference (GLOBECOM)}, 2015, pp. 1--7.

\bibitem{linearprecoding}
H.~Wang, A.~Liu, X.~Liang, S.~Peng, and K.~Wang, ``{Linear precoding for
  faster-than-Nyquist signaling},'' in \emph{3rd IEEE International Conference
  on Computer and Communications (ICCC)}, 2017, pp. 52--56.

\bibitem{svd}
T.~Ishihara and S.~Sugiura, ``{SVD}-precoded faster-than-{N}yquist signaling
  with optimal and truncated power allocation,'' \emph{IEEE Transactions on
  Wireless Communications}, vol.~18, no.~12, pp. 5909--5923, 2019.

\bibitem{eigendecomposition}
------, ``Eigendecomposition-precoded faster-than-nyquist signaling with
  optimal power allocation in frequency-selective fading channels,'' \emph{IEEE
  Transactions on Wireless Communications}, vol.~21, no.~3, pp. 1681--1693,
  2022.

\bibitem{chaki}
P.~Chaki, T.~Ishihara, and S.~Sugiura, ``Eigenvalue decomposition precoded
  faster-than-{N}yquist transmission of index modulated symbols,'' in
  \emph{IEEE International Symposium on Information Theory (ISIT)}, 2021, pp.
  3279--3284.

\bibitem{telatar}
E.~Telatar, ``{Capacity of multi-antenna {G}aussian channels},'' \emph{European
  Transactions on Telecommunications}, vol.~10, no.~6, pp. 585--595, 1999.

\bibitem{3gpp}
{3GPP, TR 25.814}, ``{Physical layer aspects for evolved universal terrestrial
  radio access (Release 7)}.''

\bibitem{rusek2009existence}
F.~Rusek, ``On the existence of the {M}azo-limit on {MIMO} channels,''
  \emph{IEEE Transactions on Wireless Communications}, vol.~8, no.~3, pp.
  1118--1121, 2009.

\bibitem{andrea}
A.~Modenini, F.~Rusek, and G.~Colavolpe, ``{Faster-than-{Nyquist signaling for
  next generation communication architectures}},'' in \emph{22nd European
  Signal Processing Conference (EUSIPCO)}, 2014, pp. 1856--1860.

\bibitem{michael}
M.~Yuhas, Y.~Feng, and J.~Bajcsy, ``{On the capacity of faster-than-{Nyquist
  MIMO} transmission with {CSI} at the receiver},'' in \emph{IEEE Globecom
  Workshops (GC Wkshps)}, 2015, pp. 1--6.

\bibitem{MIMOFTNfad}
S.~Wen, G.~Liu, F.~Xu, L.~Zhang, C.~Liu, and M.~A. Imran, ``Ergodic capacity of
  mimo faster-than-nyquist transmission over triply-selective rayleigh fading
  channels,'' \emph{IEEE Transactions on Communications}, vol.~70, no.~8, pp.
  5046--5058, 2022.

\bibitem{goldsmith}
A.~Goldsmith, \emph{{Wireless Communications}}.\hskip 1em plus 0.5em minus
  0.4em\relax Cambridge University Press, 2005.

\bibitem{measurecap}
P.~Almers, F.~Tufvesson, O.~Edfors, and A.~Molisch, ``{Measured capacity gain
  using water filling in frequency selective MIMO channels},'' in \emph{13th
  IEEE International Symposium on Personal, Indoor and Mobile Radio
  Communications (PIMRC)}, 2002, vol. 3, pp. 1347--1351.

\bibitem{covmat}
F.~Dupuy and P.~Loubaton, ``{On the capacity achieving covariance matrix for
  frequency selective MIMO channels using the asymptotic approach},''
  \emph{IEEE Transactions on Information Theory}, vol.~57, no.~9, pp.
  5737--5753, 2011.

\bibitem{finite}
A.~Scaglione, ``{Statistical analysis of the capacity of MIMO frequency
  selective Rayleigh fading channels with arbitrary number of inputs and
  outputs},'' in \emph{Proceedings IEEE International Symposium on Information
  Theory (ISIT)}, 2002, p. 278.

\bibitem{wangjui}
\BIBentryALTinterwordspacing
J.~T. Wang, ``Time–space mimo system with faster-than-{N}yquist signaling in
  frequency-selective fading channel,'' \emph{Multidimensional Syst. Signal
  Process.}, vol.~32, no.~3, pp. 1027--1039, Jul 2021. [Online]. Available:
  \url{https://doi.org/10.1007/s11045-021-00771-2}
\BIBentrySTDinterwordspacing

\bibitem{simo}
T.~E. Bogale, L.~B. Le, X.~Wang, and L.~Vandendorpe, ``Multipath multiplexing
  for capacity enhancement in {SIMO} wireless systems,'' \emph{IEEE
  Transactions on Wireless Communications}, vol.~16, no.~10, pp. 6895--6911,
  2017.

\bibitem{gray}
\BIBentryALTinterwordspacing
R.~M. Gray, ``Toeplitz and circulant matrices: A review,'' \emph{Foundations
  and Trends® in Communications and Information Theory}, vol.~2, no.~3, pp.
  155--239, 2006. [Online]. Available:
  \url{http://dx.doi.org/10.1561/0100000006}
\BIBentrySTDinterwordspacing

\bibitem{matrixbook}
F.~R. Gantmakler, \emph{\BIBforeignlanguage{eng}{The Theory of
  Matrices}}.\hskip 1em plus 0.5em minus 0.4em\relax New York: Chelsea
  Publishing Co., 1960.

\bibitem{timeseries}
P.~J. Brockwell, \emph{\BIBforeignlanguage{eng}{Time Series: {T}heory and
  Methods}}, 2nd~ed., ser. Springer Series in Statistics.\hskip 1em plus 0.5em
  minus 0.4em\relax New York: Springer-Verlag, 1991.

\bibitem{matrix}
R.~A. Horn, \emph{\BIBforeignlanguage{eng}{{Matrix Analysis}}}, 2nd~ed.\hskip
  1em plus 0.5em minus 0.4em\relax New York: Cambridge University Press, 2013.

\bibitem{spectrumbroaden}
Y.~J.~D. Kim and J.~Bajcsy, ``{On spectrum broadening of pre-coded
  faster-than-Nyquist signaling},'' in \emph{IEEE 72nd Vehicular Technology
  Conference - Fall}, 2010, pp. 1--5.

\bibitem{R3asymp}
Y.~G. Yoo and J.~H. Cho, ``{ Asymptotic optimality of binary
  faster-than-Nyquist signaling},'' \emph{IEEE Communications Letters},
  vol.~14, no.~9, pp. 788--790, 2010.

\bibitem{asynnoma}
\BIBentryALTinterwordspacing
M.~Ganji and H.~Jafarkhani, ``{Novel time asynchronous NOMA schemes for
  downlink transmissions},'' 2018. [Online]. Available:
  \url{https://arxiv.org/abs/1808.08665}
\BIBentrySTDinterwordspacing

\bibitem{blocktoep}
J.~Gutierrez-Gutierrez and P.~M. Crespo, ``{Asymptotically equivalent sequences
  of matrices and Hermitian block Toeplitz matrices with continuous symbols:
  Applications to MIMO systems},'' \emph{IEEE Transactions on Information
  Theory}, vol.~54, no.~12, pp. 5671--5680, 2008.

\end{thebibliography}

\end{document}